\documentclass[format=sigconf, review=false]{acmart}
\usepackage{acm-ec-26-arxiv}
\usepackage{booktabs} 
\usepackage[ruled]{algorithm2e} 

\SetAlFnt{\small}
\SetAlCapFnt{\small}
\SetAlCapNameFnt{\small}
\SetAlCapHSkip{0pt}
\IncMargin{-\parindent}

\usepackage[utf8]{inputenc}
\usepackage{amsmath}
\usepackage{amsthm}
\usepackage{mathtools}
\usepackage{hyperref}
\usepackage{verbatim}
\usepackage{caption}
\usepackage{subcaption}
\usepackage{bm}
\usepackage[bb=dsserif]{mathalpha} 
\usepackage [mathscr] {eucal}
\usepackage{float}
\usepackage{xcolor}
\usepackage[colorinlistoftodos]{todonotes}  
\usepackage{color}                          
\usepackage{totcount}                       
\usepackage{tikz}
\usetikzlibrary{intersections}
\usetikzlibrary{patterns} 
\usetikzlibrary{patterns.meta}
\usetikzlibrary{fit,positioning,backgrounds}
\usepackage{algorithmic}

\definecolor{Red}{RGB}{210, 0 , 30}
\definecolor{Blue}{RGB}{0, 30, 210}
\definecolor{Green}{RGB}{15, 210, 15}
\definecolor{Cyan}{RGB}{0, 190, 180}
\definecolor{Beryl}{RGB}{80, 170, 200}
\definecolor{Cerulean}{RGB}{42, 82, 190}
\definecolor{Azure}{RGB}{0, 127, 255}
\definecolor{Indigo}{RGB}{70, 0, 130}
\definecolor{Navy}{RGB}{0, 0, 128}
\definecolor{Sapphire}{RGB}{15, 82, 186}
\definecolor{Turquoise}{RGB}{64, 200, 210}
\definecolor{Ultramarine}{RGB}{20, 10, 150}
\definecolor{Bordeaux}{RGB}{120, 0, 5}

\reversemarginpar
\newtotcounter{notecount}
\newcommand{\notewarning}{%
\ifnum\totvalue{notecount}>0%
 \vspace{1ex}
\begin{center}
 \begin{tikzpicture}[baseline=(A.south)]
    \node (A) [] at (0,0){};
    \node [rounded corners=1pt,rectangle, draw=red, fill=red!20,text=black](B) at (0.1ex,0ex){
        \Large \raggedright {\bf Warning:} There are still some notes left!
    };
 \end{tikzpicture}
\end{center}
 \vspace{1ex}
\fi
}
\makeatletter
\def\myaddcontentsline#1#2#3{%
  \addtocontents{#1}{\protect\contentsline{#2}{#3}{Section \thesubsection\ at p. \thepage}{}}}
\renewcommand{\@todonotes@addElementToListOfTodos}{%
    \if@todonotes@colorinlistoftodos%
        \myaddcontentsline{tdo}{todo}{{%
            \colorbox{\@todonotes@currentbackgroundcolor}%
                {\textcolor{\@todonotes@currentbackgroundcolor}{o}}%
            \ \@todonotes@caption}}%
    \else%
        \myaddcontentsline{tdo}{todo}{{\@todonotes@caption}}%
   \fi}%
\newcommand*\mylistoftodos{%
  \begingroup
       \setbox\@tempboxa\hbox{Section 9.9 at p. 99}%
       \renewcommand*\@tocrmarg{\the\wd\@tempboxa}%
       \renewcommand*\@pnumwidth{\the\wd\@tempboxa}%
       \listoftodos%
  \endgroup
}
\makeatother
\definecolor{lightgreen}{rgb}{0.86, 0.93, 0.78}
\definecolor{bordergreen}{rgb}{0.55, 0.76, 0.74}
\definecolor{lightblue}{rgb}{0.70, 0.90, 0.99}
\definecolor{borderblue}{rgb}{0.01, 0.66, 0.96}
\definecolor{lightamber}{rgb}{1, 0.93, 0.70}
\definecolor{borderamber}{rgb}{1, 0.76, 0.03}

\newcommand\restr[2]{{
  \left.\kern-\nulldelimiterspace 
  #1 
  \littletaller 
  \right|_{#2} 
  }}

\newcommand{\littletaller}{\mathchoice{\vphantom{\big|}}{}{}{}}

\newtheorem{definition}{Definition}
\newtheorem{theorem}{Theorem}
\newtheorem{lemma}{Lemma}

\newtheorem{axiom}{Axiom}
\newtheorem*{example}{Example}

\newtheorem*{conjecture}{Conjecture}

\newtheorem{rthm}{Theorem}
\newenvironment{reptheorem}[1]
  {\renewcommand\therthm{\ref*{#1}}\rthm}
  {\endrthm}

\newtheorem{rlem}{Lemma}
\newenvironment{replemma}[1]
  {\renewcommand\therlem{\ref*{#1}}\rlem}
  {\endrlem}

\setcitestyle{authoryear}

\title{Breaking the Illusion of Artificial Consensus: Clone-Robust Weighting for Arbitrary Metric Spaces}

\author{Damien Berriaud}
\affiliation{%
  \institution{ETH Z\"urich}
  \city{Z\"urich}
  \country{Switzerland}
}
\email{dberriaud@ethz.ch}

\author{Roger Wattenhofer}
\affiliation{%
  \institution{ETH Z\"urich}
  \city{Z\"urich}
  \country{Switzerland}
}
\email{wattenhofer@ethz.ch}

\begin{abstract}
Independent media are central to democratic decision-making, yet recent technological developments, such as social media, pseudonymous identities, and generative AI,  have made them more vulnerable to coordinated influence campaigns--usually referred to as Coordinated Inauthentic Behavior.
By automatically generating large numbers of similar messages and news reports, such campaigns create an illusion of widespread support, and exploit the tendency of human observers and aggregation mechanisms alike to treat frequency as evidence of credibility or consensus.
Clone-robust weighting functions offer a solution to this problem by assigning influence in  a way that is insensitive to arbitrary duplication or near-duplication, as measured by a metric.  This axiomatic framework rests on three principles: symmetry (equivalent elements are treated equally), continuity (weights vary smoothly under perturbations), and clone-robustness (adding duplicates or near-duplicates does not distort the overall distribution).
Earlier methods relied on specific topological properties of Euclidean spaces, thereby limiting their applicability. In contrast, we provide a general construction of clone-robust weighting functions that applies to arbitrary metric spaces, is entirely independent of the underlying topology, and admits efficient computation.
Our approach identifies radius graphs as a natural invariant under cloning, and builds on graph weighting functions that satisfy a basic locality condition. 
We explore the resulting design space, starting with a simple family that satisfies the core axioms, and then identify explainability as a guiding criterion for navigating this design space. To this end, we introduce the sharing coefficient---a measure that quantifies how influence is shared or transferred between elements---which enables meaningful comparison and interpretation of different constructions. 
This analytical tool requires however additional axiomatic principles that are not verified by previous constructions.
We then consider alternative constructions based on clique-covers, and unveil approaches using clique-partitions that are grounded in information-theoretic principles.

\end{abstract}

\begin{document}

\begin{titlepage}

\maketitle


\end{titlepage}


\section{Introduction}

As you scroll through an online feed, you encounter a particular claim and dismiss it without much thought. Later, you see it again, phrased more cautiously.
Then it appears as a joke, then as a meme, then as a confident argument using technical language. Each instance looks different, comes from a different account, and appeals to a different audience.
None of these posts is particularly convincing in isolation, yet the doubt slowly creeps in: is there some truth to this after all?

On online platforms, influence is implicitly tied to frequency. When an opinion is posted repeatedly it tends to be perceived as more prevalent, more credible, or less marginal than it truly is. This creates a systematic popularity bias: repetition is mistaken for independent support. The effect is amplified when repetitions are not organic but manufactured--through troll farming, astroturfing, or what platform researchers and policy makers call Coordinated Inauthentic Behavior (CIB). 
As a result, a small but hyper-active faction can hijack the conversation, and drown out opposing opinions by saturating platforms with variations of the same unfounded claim.

If redundancy were a discrete phenomenon, this problem would be easy to address. Exact duplicates could be detected automatically, merged, or have their influence divided among identical copies. Such an approach implicitly assumes a binary notion of similarity: two messages are either the same, or they are unrelated.

However, the rise of generative AI has undermined such protections against astroturfing. Thousands of variations of a single core message can now be produced at negligible cost, differing in wording, tone, framing, or targeted audience, and yet convey essentially the same claim. 
These AI-spun posts appear independent, evade protections based on exact or near-exact textual similarity, and collectively amplify the perceived support for the claim.

What breaks down, however, is not merely the detection of duplicates, but the underlying model of redundancy itself. Similarity between messages is not binary: posts can be closer or farther apart in how they frame an argument, hedge a claim, or appeal to emotion or authority.
Treating such content as either ``the same'' or ``unrelated'' misses this gradual structure, and allows dense clouds of near-copies to masquerade as independent support.

To capture this continuum, we embed messages in a metric space, where distances quantify degrees of similarity rather than enforcing hard equivalence. This shift, however, introduces a new challenge: 
how should we assign influence to posts so that approximate clones share influence locally without allowing dense clusters of near-copies to overwhelm distant and unrelated opinions? In other words, how do we make influence robust to coordinated, AI-amplified replication?
This is precisely the problem of clone-robust weighting in metric spaces.

\paragraph{Contribution.}

We first propose in Section~\ref{sec:def_clone_robust_weighting} to solve this problem of resistance to arbitrary near duplication by reweighting the probability of appearance of messages with clone-robust weighting functions that verify the symmetry of the underlying space, are continuous, and robust under the addition of clones.

We then introduce in Section~\ref{sec:general_weighting function} a general framework for constructing clone-robust weighting functions on arbitrary metric spaces. Our approach is based on neighborhood graphs: for each distance threshold, we associate a graph whose edges connect sufficiently similar elements, and we identify equivalence classes within these graphs as structural invariants under cloning.
We show in Theorem~\ref{thm:general_weight_function} that any \emph{graph weighting function} satisfying two simple principles--- symmetry and locality---induces a clone-robust weighting function over the entire metric space, by simply aggregating the neighborhood graphs' weighting across multiple radii. This construction yields a unified formulation that applies to an \emph{arbitrary} choice of metric over the space, independently of the particular topology it induces.

We then explore in Section~\ref{sec:graph_weighting} the design space of graph weighting functions, and identify a simple family verifying the requirements of Theorem~\ref{thm:general_weight_function}. We then illustrate the shortcomings of this family through the lens of interpretability.

 In Section~\ref{sec:sharing_measures}, we introduce sharing coefficients for previous clone-robust weighting functions, designed to quantify how influence is redistributed between elements under a given weighting rule. These coefficients provide a transparent decomposition of weights into private and shared components, and allow us to compare weighting schemes beyond axiomatic compliance.
 We then consider the existence of such sharing coefficients as prerequisite for our graph-based constructions, and propose in Section~\ref{sec:sharing_coeff_graphs} a set of additional requirements that guarantee their existence.

 We then seek for alternative constructions that do not share the previously identified shortcomings: these rely on maximal-clique covers in Section~\ref{sec:MCC_weighting} and on an information-theoretic approach in Section~\ref{sec:entropy_constr}.

\section{Related Work}\label{sec:related_work}

The impact of redundancy, as well as strategies to mitigate it, has been studied in multiple domains.

\paragraph{Reweighting in Machine Learning}

In standard learning settings, training and test data are assumed to follow the same distribution.
\emph{Domain adaptation} \cite{wang_deep_2018} relaxes this assumption, addressing distribution shifts caused for example by class imbalance~\cite{torralba_unbiased_2011}. This corresponds precisely to the simpler case discussed in the introduction, where redundant elements can be exactly identified.

Bias in training data is often mitigated by reweighting samples and minimizing a weighted loss.
Classical methods such as AdaBoost \cite{freund_decision-theoretic_1997}, hard-negative mining \cite{chang_active_2018}, and self-paced learning \cite{jiang_self-paced_2015} adjust weights dynamically based on sample difficulty, while meta-learning approaches \cite{jamal_rethinking_2020,ren_learning_2019,shu_meta-weight-net_2019} optimize them using a small unbiased validation set.

For imbalanced or long-tailed datasets, weights are typically assigned inversely to class frequency \cite{cao_learning_2019,dong_class_2017,gebru_fine-grained_2017}.
Recent methods \cite{cui_class-balanced_2019} incorporate data overlap by defining an \emph{effective number of samples}, reflecting the diminishing value of redundant data.
Each sample’s contribution is then measured by the additional coverage it provides---an idea closely related to the \emph{private weight} defined in Section~\ref{sec:sharing_measures}.

\paragraph{Clone-robust Aggregation and Evaluation}

The study of how aggregation mechanisms behave under the addition of clones---duplicates of existing elements---has a long history in social choice theory.
In voting, researchers have examined the cloning of alternatives and introduced desirable notions such as independence of clones \cite{tideman_independence_1987} and its stronger variant, composition consistency \cite{laffond_composition-consistent_1996,brandl_consistent_2016}.
Similarly, the cloning of agents has been explored under the concept of false-name-proofness, which ensures that aggregation outcomes cannot be manipulated through identity duplication \cite{conitzer_using_2010,nehama_manipulation-resistant_2022,todo_characterizing_2009}.

More recently, clone-robustness has reappeared in the evaluation of machine learning models, where benchmarking has been reframed as a meta-game between a benchmark player and one \cite{balduzzi_re-evaluating_2018} or two model players \cite{gao_re-evaluating_2025}.
In this context as well, independence of clones was identified as a desirable property and realized through various aggregation mechanisms \cite{marris_deviation_2025,balduzzi_re-evaluating_2018,liu_re-evaluating_2025}.

Most of these works, however, focus on perfect clones---identical entities that are indistinguishable within the aggregation process.
A notable exception is \cite{procaccia_clone-robust_2025}, which introduced robustness to approximate clones as a key desideratum in preference aggregation for reinforcement learning from human feedback.
Their method achieved this robustness by assigning weights to alternatives proportional to the size of their Voronoi regions, thereby discounting densely clustered options.

The first formal treatment of weighting under approximate cloning was provided by \cite{berriaud_clone-robust_2025}.
That work proposed a general axiomatic framework for weighting elements in metric spaces, governed by four principles:
(1) \emph{positivity} (every element receives positive weight),
(2) \emph{symmetry} (weights respect the symmetry of the space),
(3) \emph{continuity} (weights vary smoothly under small perturbations), and
(4) \emph{locality} (nearby elements influence each other’s weight, while distant ones remain unaffected).
It also highlighted the limitations of Voronoi-based methods and introduced geometrically grounded weighting functions for Euclidean spaces.
The present work builds on this axiomatic foundation and extends clone-robust weighting to general metric spaces. Furthermore, it proposes sharing coefficients associated with the previous methods, i.e., analytical tools that improve the interpretability of the weighting process.

\section{Clone-robust Weighting Functions}\label{sec:def_clone_robust_weighting}

Consider an algorithmic news aggregator that collects posts from multiple media outlets and users into a centralized feed. When a user refreshes their feed, the system gathers all newly available posts into a finite set $S\subseteq M$, where $M$ denotes the space of all possible posts.
The algorithm then selects a probability distribution $\pi_S$ over $S$ and populates the user’s feed by independently
\footnote{We assume that the probability of sampling the same post twice is negligible,  i.e., $k^2 \Vert p_S\Vert_2^2\ll1$.} 
sampling $k$ posts according to $\pi_S.$

The space of posts  $M$ is equipped with a notion of distance, given by an operator
 $d:E\times E \mapsto \mathbb{R}_{\geq 0}$, 
 which satisfies \emph{identity}, i.e., $(d(x,x) =0$, \emph{symmetry}, i.e., $d(x,y) = d(y,x)$, and \emph{triangle inequality}, i.e., $d(x,z) \leq d(x,y)+d(y,z)$,  for all $x,y,z\in M.$
Note that $d$ is in general a pseudo-metric rather than a metric, as it does not satisfy separability: distinct posts $x\neq y$ may satisfy $d(x,y) =0$, for instance when two identical posts originate from different accounts.

A naïve choice for $\pi_S$ is the uniform distribution over  $S$, but this choice is vulnerable to duplication attacks: a post replicated by multiple sources gains disproportionate visibility. At the same time, the choice of the pseudo-metric 
$d$ is of utmost importance, as it determines what is considered a duplication in the first place.
Since media revenue may depend on post visibility, it is essential that this notion of distance be defined in a transparent and democratic manner. Throughout the remainder of the paper, we assume that an appropriate and agreed-upon notion of distance has been fixed, and we study how it should inform the choice of the probability distribution 
$\pi_S.$

Formally, we introduce a\emph{weighting function $f$}, which assigns to each finite subset  $S\subseteq M$ a probability distribution over its elements:
$$ f: S\in \bigcup_{n\geq1} \mathcal{P}_n(M)\mapsto p_S\in \Delta(S),$$
where  $ \mathcal{P}_n(M)$ denotes the family of all subsets of $M$ with cardinality $n\in\mathbb{N}$, and $\Delta(S) = \big\{ p_S : S \mapsto [0,1] \mid \sum_{x \in S} p_S(x) =1 \big\}$ denotes the simplex over the elements of $S.$

\medskip
We next discuss a collection of axioms capturing desirable properties for such weighting functions.

First, weights should respect the symmetries of the underlying pseudo-metric space: posts that are isomorphic within $S$ should receive equal weight.

Second, since perfect clones (i.e., distinct posts at distance zero) are always isomorphic, we extend this requirement to approximate clones and ask that nearby posts receive similar weights. Moreover, the weighting should be robust to perturbations of the set $S$ itself, i.e., 
$f$ should be continuous.

Third, we require protection against duplication attacks. When an approximate copy of a post is added, the weight of posts that lie farther than a disambiguation threshold $\alpha >0$ should remain essentially unaffected, so that weight is only redistributed locally.

Finally, we impose a non-censorship requirement: every post must receive strictly positive weight, simultaneously ensuring that an attacker cannot drive the weight of a post to zero through duplication.

We now provide a formal definition of the desiderata introduced above, with the regularity properties expressed via Lipschitz-like conditions.

\begin{definition}[Clone-robust Weighting Function]\label{def:axi_clone_robust}
Let $\alpha>0$ be a disambiguation factor, let $n\in \mathbb{N}_{>0}$, and let $L_n$, $L_n'$, $L_n''>0$ be Lipschitz constants.
A weighting function $f$ is said to satisfy a given property at cardinality $n$ if the corresponding condition holds for every subset $S\in\mathcal{P}_n(M)$, every $x,y \in S$ and every $z \in M\setminus S$.

\begin{itemize}
    \item \textbf{Symmetry:}  $f(S)(x) = f(S)(\sigma(x))$ for every self-isometry $\sigma:S\mapsto S.$
        
    \item \textbf{Lipschitz Clone Fairness:} $ \vert f (S)(x) - f(S)(y) \vert \leq L_n d(x,y).$ 

    \item \textbf{Lipschitz Continuity:} for each bijection $\pi:S\mapsto \pi(S) \subseteq M$, we have $\vert f (S)(x) - f (\pi(S))(\pi(x)) \vert \leq L_n' ~\max_{x\in S} d(x,\pi(x)).$
        
    \item \textbf{$\alpha$-Lipschitz Locality:} if  $d(x,y)\geq \alpha $, then 
    $ \vert f (S \cup \{z\})(y) - f(S)(y) \vert \leq L_n''~d(x,z).$ 
    
    \item \textbf{Positivity:} $f(S)(x) >0.$

\end{itemize}

A weighting function that satisfies all the above properties for every $n\in\mathbb{N}_{>0}$ is called an $\boldsymbol{\alpha}$\textbf{-clone-robust weighting function}.

\end{definition}

Note that the above properties strengthen the axioms proposed in~\cite{berriaud_clone-robust_2025} by providing quantitative Lipschitz control---where differences are explicitly bounded in proportion to distance---rather than the qualitative uniform continuity-type guarantees of the form ``for every $\varepsilon>0$, there exists $\delta>0$ ...''.

\section{Construction based on Neighborhood Graphs}\label{sec:general_weighting function}

A central difficulty in clone-robust weighting is that similarity between posts evolves gradually with distance, while local sharing of influence is easy to define only when similarity is all-or-nothing, as captured by an equivalence relation. 
To reconcile these two perspectives, we introduce thresholded similarity structures that recover a binary notion of closeness from the underlying metric.

Let $S \subseteq M$ be a finite subset, and $r\geq 0$ be a non-negative radius.
We define the $r$-neighborhood graph $G_r(S) = \big( S, E_r(S)\big)$ as the undirected graph with vertex set $S$ and edge set
$E_r(S) = \{ (x,y) \in S^2 \mid d(x,y) \leq r\} $, 
i.e., two nodes are connected if and only if their distance is at most $r$.

In the graph $G_r(S)$, an edge represents a notion of similarity between nodes. This relation is generally not transitive: it is possible that $x \sim y$ and $y \sim z$, while $x \not\sim z$. Nevertheless, by the triangle inequality, we have $d(x,z) \leq d(x, y) + d(y, z) \leq 2r$, which implies that $x \sim z$ in $G_{2r}(S)$.

To recover a transitive notion of similarity, we consider the closed neighborhood of a node $x$ in a graph $G =(V,E)$, defined as $N_G[x] = \{x\} \cup \{ y \in V \mid (x,y) \in E \} $.
This induces an equivalence relation on $V$,  where $x \equiv_G y$ if and only if $N_G[x] = N_G[y]$. 
We further denote by $[x]_G = \{ y \in V \mid N_G[y] = N_G[x] \}$ the equivalence class of a node $x$ under this relation.

Intuitively, perfect clones in $S$--that is, vertices $x,y \in S$ with $d(x,y)=0$--are always equivalent in $G_r(S).$ 
More importantly, the same phenomenon occurs for approximate clones: they too are equivalent in  $G_r(S)$ for ``most values of $r$'', as illustrated in Figure~\ref{fig:forbidden_intervals_approximate_clones}.

\begin{figure}[h]
\centering
\begin{tikzpicture}[scale=1.0]
  
  \definecolor{PaperBlue}{HTML}{1f77b4}
  \definecolor{PaperOrange}{HTML}{ff7f0e}
  \definecolor{PaperGreen}{HTML}{2ca02c}
  \definecolor{PaperRed}{HTML}{d62728}
  \definecolor{PaperPurple}{HTML}{9467bd}

  \draw[->] (0,0) -- (8,0) node[right] {$r$ };
  \node[above] at (0,0.1) {0};
  \draw[-] (0,-0.07) -- (0,0.07); 

  \def\dxy{0.4} 
  \node[circle, fill = black, inner sep=1pt] at (\dxy,0){};
  \node[below,font=\small] at (\dxy,0) {$d(x,\textcolor{PaperBlue}{y})$};
  \fill[PaperBlue, fill opacity=0.25] (0,0) rectangle (2* \dxy, 0.1);

  \def\poslist{1.9, 3, 5.3, 7.1}
  \foreach \pos [count=\i] in \poslist {
    \pgfmathparse{\i-1}
    \pgfmathtruncatemacro{\ci}{mod(int(\pgfmathresult),5)}
    \pgfkeysgetvalue{/tikz/pgfmathresult}{\pgfmathresult}
    \ifnum\ci=0 \def\thiscolor{PaperOrange}\fi
    \ifnum\ci=1 \def\thiscolor{PaperGreen}\fi
    \ifnum\ci=2 \def\thiscolor{PaperRed}\fi
    \ifnum\ci=3 \def\thiscolor{PaperPurple}\fi
    \node[below,font=\small] at (\pos,0) {$d(x,\textcolor{\thiscolor}{z_{\i}})$};
    \node[circle, fill = \thiscolor, inner sep=1pt] at (\pos,0){};
    \pgfmathsetmacro{\leftx}{\pos - \dxy}
    \pgfmathsetmacro{\rightx}{\pos + \dxy}
    \fill[PaperBlue, fill opacity=0.25] (\leftx,0) rectangle (\rightx,0.1);
  }

  \def\poslist{1.55, 2.88, 5.54, 7.2}
  \foreach \pos [count=\i] in \poslist {
    \pgfmathparse{\i-1}
    \pgfmathtruncatemacro{\ci}{mod(int(\pgfmathresult),5)}
    \pgfkeysgetvalue{/tikz/pgfmathresult}{\pgfmathresult}
    \ifnum\ci=0 \def\thiscolor{PaperOrange}\fi
    \ifnum\ci=1 \def\thiscolor{PaperGreen}\fi
    \ifnum\ci=2 \def\thiscolor{PaperRed}\fi
    \ifnum\ci=3 \def\thiscolor{PaperPurple}\fi
    \node[above, font=\small] at (\pos,0.05) {$d(\textcolor{PaperBlue}{y},\textcolor{\thiscolor}{z_{\i}})$};
    \node[circle, fill = PaperBlue, inner sep=1pt] at (\pos,0){};
  }

\end{tikzpicture}
\caption{
Impact of adding a vertex $y$ that is an approximate clone of $x$.
By the triangle inequality, each distance $d(y,z)$ with $z\in S$ lies within an interval of length $2d(x,y)$ centered at $d(x,z)$.
Thus, for all radii $r \ge 0$ outside the blue forbidden intervals, the vertices $x$ and $y$ are equivalent in $G_r(S \cup \{y\})$.
}
\label{fig:forbidden_intervals_approximate_clones}
\end{figure}
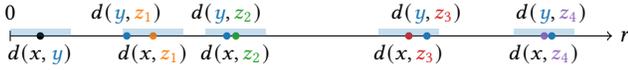

As a consequence, the graphs $(G_r(S))_{r\geq0} $ and their equivalence classes are some sort of invariant under the addition of clones. Based on this intuition, we introduce the notion of \emph{clone-robust graph weighting function}.

\begin{definition} [Graph weighting Functions ]
A graph weighting function $w$ is a function that associates with each finite graph a probability distribution over its vertices, i.e.,
$$ w: (V,E)\in \mathcal{G} \mapsto p_V \in \Delta(V),$$
where $\mathcal{G}$ denotes the set containing all finite undirected unweighted graphs, and $\Delta(V) = \big\{\ p_V : V \mapsto [0,1] \mid \sum_{x \in V} p_V(x) =1 \big\}$ denotes the simplex over the elements of $V.$ 
We say that a graph weighting function $w$ satisfies a given property if it verifies the corresponding condition for every graph $G\in\mathcal{G}$ and vertex $x\in V(G).$ 

\begin{itemize}
\item \textbf{Symmetry}, i.e.,  $w(G)(x) = w(\sigma(G))(\sigma(x))$ for every graph isomorphism $\sigma: \mathcal{G}\mapsto \mathcal{G}.$ 
\item \textbf{Locality}, i.e.,  $w(G)(y) = w(G\setminus \! \{z\})(y)$ for each vertex $y \in V(G)\setminus \! N_G[x]$ and $z\in V(G)$ equivalent to $x.$
\end{itemize}

We refer to a graph weighting function that verifies symmetry
and locality as a \textbf{clone-robust graph weighting function}.
\end{definition}

We next show a simple example of clone-robust graph weighting function.

\begin{figure}[htb]

\centering
\begin{tikzpicture}[scale=0.95,
  every node/.style={circle, thick, draw, minimum size=8mm, inner sep=0pt, font=\small}
]
    \definecolor{PaperBlue}{HTML}{1f77b4}
    \definecolor{PaperOrange}{HTML}{ff7f0e}
    \definecolor{PaperGreen}{HTML}{2ca02c}
    \definecolor{PaperRed}{HTML}{e41a1c}      
    \definecolor{PaperPurple}{HTML}{6a0dad}   
    \definecolor{PaperPink}{HTML}{ff69b4}

    \node[draw=PaperBlue, fill=PaperBlue!4, ] (0) at (0,1) {0}; 
    \node[draw=PaperBlue, fill=PaperBlue!4, ] (1) at (0,-1) {1};
    \node[draw=PaperBlue, fill=PaperBlue!4] (2) at (1.732,0) {2};
    \node[draw=PaperOrange, fill=PaperOrange!4] (3) at (3.732,0) {3};
    \node[draw=PaperGreen, fill=PaperGreen!4] (4) at (5.732, 1) {4};
    \node[draw=PaperGreen, fill=PaperGreen!4] (5) at (5.732, -1) {5};
    \node[draw=PaperPink, fill=PaperRed!4] (6) at (7.732, 1) {6};
    \node[draw=PaperPurple, fill=PaperPurple!4] (7) at (7.732, -1) {7};
    
    \node[draw=none, circle=none, font=\normalsize\bfseries, text = PaperRed] at ( 3.732, -1.2) {$ \mathbf{\big\vert V/{\equiv_G} \big\vert = 5}$ };

    \node[draw=none, circle=none, font=\normalsize] at (0.7, 1.4) {$\frac{1}{\textcolor{PaperRed}{5} \cdot \textcolor{PaperBlue}{3}}$}; 
    \node[draw=none, circle=none, font=\normalsize] at (0.7, -1.3) {$\frac{1}{\textcolor{PaperRed}{5} \cdot \textcolor{PaperBlue}{3}}$}; 
    \node[draw=none, circle=none, font=\normalsize] at (1.9, 0.8) {$\frac{1}{\textcolor{PaperRed}{5} \cdot \textcolor{PaperBlue}{3}}$}; 
    \node[draw=none, circle=none, font=\normalsize] at (3.732, 0.8) {$\frac{1}{\textcolor{PaperRed}{5} \cdot \textcolor{PaperOrange}{1}}$}; 
    \node[draw=none, circle=none, font=\normalsize] at (5.032, 1.4) {$\frac{1}{\textcolor{PaperRed}{5} \cdot \textcolor{PaperGreen}{2}}$}; 
    \node[draw=none, circle=none, font=\normalsize] at (5.032, -1.3) {$\frac{1}{\textcolor{PaperRed}{5} \cdot \textcolor{PaperGreen}{2}}$}; 
    \node[draw=none, circle=none, font=\normalsize] at (8.432, 1.4) {$\frac{1}{\textcolor{PaperRed}{5} \cdot \textcolor{PaperPink}{1}}$}; 
    \node[draw=none, circle=none, font=\normalsize] at (8.432, -1.3) {$\frac{1}{\textcolor{PaperRed}{5} \cdot \textcolor{PaperPurple}{1}}$}; 

  \draw (0) -- (1) -- (2) -- (0) -- (3) -- (1);
  \draw (2) -- (3)-- (4)-- (5) -- (3);
  \draw (4) -- (6) -- (5);
  \draw (4) -- (7) -- (5);

\end{tikzpicture}

\caption{Illustration of the class-uniform weighting $w^{\mathrm{CU}}$ on a small graph. 
Nodes are grouped into $\vert V/{\equiv_G}\vert =5$ equivalence classes: 
$\{0,1,2\}$ in blue, $\{3\}$ in orange, $\{4,5\}$ in green, $\{6\}$ in pink, and $\{7\}$ in purple. 
Each class receives the same total weight, distributed uniformly among its vertices.}
\label{fig:w^CU}
\end{figure}
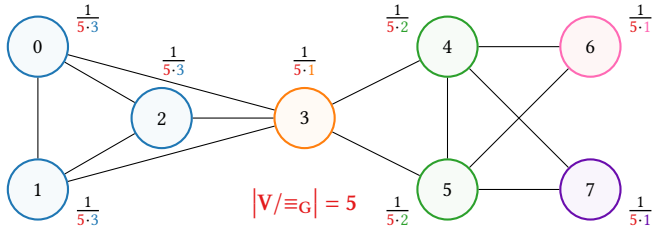

\begin{example}[Uniform distribution over equivalence classes]
For a finite undirected graph $G=(V,E)$, the equivalence relation $\equiv_G$ partitions $V$ into classes of structurally identical vertices.  
A natural way to assign weights in such a graph is to let each class account for the same fraction of the distribution, and treat all vertices within a class equally (c.f. Figure~\ref{fig:w^CU}).
We call the resulting rule the \emph{class-uniform graph weighting function}.  
Formally, if $V/{\equiv_G}$ denotes the set of equivalence classes, we set
$$ w^{\mathrm{CU}}(G)(x) \;=\; \frac{1}{\vert V/{\equiv_G}\vert \cdot \vert [x]_G\vert}, \qquad x \in V.$$

The function $w^{\mathrm{CU}}$ satisfies the required properties: 
symmetry holds because graph isomorphisms preserve equivalence classes, and locality follows from the fact that adding a new vertex to an existing equivalence class affects only the weights of vertices within that class.

\end{example}

Clone-robust graph weighting functions are of special interest because they enable the construction of clone-robust weighting functions for arbitrary pseudo-metric.

\begin{theorem}\label{thm:general_weight_function}
    Let $w$ be a clone-robust graph weighting function, $\alpha$ be a positive radius and $\nu$ be a probability density function with support on $[0,\alpha]$ and value bounded by $\overline{\nu}.$

    \noindent For every finite subset $S\subseteq M$ and every $x\in S$, define
    $$       f_{\nu,w}(S)(x) \;=\; \int_0^\alpha \nu(r)\, w(G_r(S))(x)\,dr. $$
    Then $f_{\nu,w}$ is an $\alpha$-clone-robust weighting function, i.e., it satisfies for all $S \in\mathcal{P}(M)$ and $x\in S$:

    \begin{itemize}

        \item \textbf{Symmetry}, i.e.,  $f_{\nu,w} (S)(x) = f_{\nu,w} (S)(\sigma(x))$ for every self-isometry $\sigma:S\mapsto S.$
        
        \item \textbf{Lipschitz Clone Fairness}, i.e., $ \vert f_{\nu,w} (S)(x) - f_{\nu,w}(S)(y) \vert \leq 2 ~\overline{\nu} ~\vert S \vert ~d(x,y)$ for each $y \in S.$
        
        \item \textbf{Lipschitz Continuity}, i.e., $\vert f_{\nu,w} (S)(x) - f_{\nu,w} (\pi(S))(\pi(x)) \vert \leq 2 ~\overline{\nu} ~\vert S \vert^2 ~\max_{x\in S} d(x,\pi(x))$,
        for each bijection $\pi:S\mapsto \pi(S) \subseteq M.$ 
        
        \item \textbf{Lipschitz $\alpha$-Locality}, i.e.,
        $ \vert f_{\nu,w} (S \cup \{y\})(z) - f_{\nu,w}(S)(z) \vert \leq 2 ~\overline{\nu} ~\vert S \vert ~d(x,y)$ for each $z \in S$ with $d(x,z)\geq \alpha $ and $y \in M\setminus \! S.$

        \item \textbf{Positivity}, i.e., $f_{\nu,w}(S)(x) >0$;

    \end{itemize}

    \noindent Moreover, $f_{\nu,w} (S)(x)$ can be computed in time $O\big( \vert S\vert^2 (W(\vert S\vert) + C_\nu + C_d + \log \vert S\vert )\big)$, where $W(i)$ is the worst-case evaluation complexity of $w$ over all graphs with $i \in \mathbb{N}$ vertices, $C_\nu$ is the worst-case time complexity to evaluate the cumulative distribution of $\nu$ at a single real argument, and $C_d$ is the worst-case complexity to evaluate the metric $d.$
\end{theorem}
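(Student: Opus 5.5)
The plan is to prove each property pointwise in $r$ and then integrate against $\nu$, using one elementary observation repeatedly. The graph $G_r(S)$ changes with $r$ only at the finitely many critical radii $\{d(u,v) : u,v\in S\}$. Between consecutive critical radii it is constant, so $r\mapsto w(G_r(S))(x)$ is a step function with at most $\binom{|S|}{2}+1$ pieces, each with values in $[0,1]$.

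\emph{Symmetry and positivity.} A self-isometry $\sigma$ of $S$ preserves all distances. It therefore induces a graph isomorphism of every $G_r(S)$ onto itself, and symmetry of $w$ gives equality of the integrands for every $r$. For positivity, I would argue that $w(G)(x)>0$ for all $G$ and $x$. Take $G$ and add a vertex $z'$ equivalent to some vertex, with the same closed neighbourhood. By locality, the weights of vertices outside that neighbourhood are unchanged. If $w$ could vanish on a vertex, I would try to derive a contradiction by cloning repeatedly. If this fails, I would instead assume positivity of $w$ as implicit in the definition. Given that, $\nu$ integrates to $1$ on $[0,\alpha]$ and the integrand is positive, so the integral is positive.

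\emph{Clone fairness.} Fix $x,y\in S$. As in Figure~\ref{fig:forbidden_intervals_approximate_clones}, for each $u\in S\setminus\{x,y\}$ the triangle inequality gives $|d(x,u)-d(y,u)|\le d(x,y)$. Whenever $r$ lies outside all the intervals between $d(x,u)$ and $d(y,u)$, and also $r\ge d(x,y)$, we have $N[x]=N[y]$ in $G_r(S)$. Then $x\equiv y$, and swapping $x$ and $y$ is a graph automorphism, so symmetry of $w$ gives equal weights. The excluded set is a union of at most $|S|$ intervals of length $\le d(x,y)$, including $[0,d(x,y))$. On it the integrand difference is bounded by $1$, hence
\[
|f(S)(x)-f(S)(y)|\le \overline\nu\, |S|\, d(x,y),
\]
which is within the stated factor $2$.

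\emph{Continuity.} Set $\delta=\max_u d(u,\pi(u))$. For each pair $u,v$ we have $|d(u,v)-d(\pi u,\pi v)|\le 2\delta$. If $r$ avoids the $\binom{|S|}{2}$ intervals between $d(u,v)$ and $d(\pi u,\pi v)$, then $\pi$ is a graph isomorphism $G_r(S)\to G_r(\pi(S))$, and symmetry gives equal weights. The measure of the bad set is at most $|S|^2\delta$, with each interval of length $\le 2\delta$, which yields $2\overline\nu|S|^2\delta$.

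\emph{Locality.} Here $y$ is added near $x$ and $z$ satisfies $d(x,z)\ge\alpha$. For $r<\alpha$, $z\notin N[x]$ in $G_r$. For $r$ outside the at most $|S|$ forbidden intervals of length $2d(x,y)$ (again including $r<d(x,y)$), $y$ is equivalent to $x$ in $G_r(S\cup\{y\})$. Locality of $w$ with $G=G_r(S\cup\{y\})$ and removed vertex $y$ then gives equality of the weights at $z$. The bad set has measure at most $2|S|d(x,y)$, which gives the bound.

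\emph{Complexity.} Compute the $O(|S|^2)$ distances at cost $C_d$ each and sort them in $O(|S|^2\log|S|)$. On each of the $O(|S|^2)$ intervals between consecutive critical radii in $[0,\alpha]$, evaluate $w$ once and multiply by the $\nu$-mass of the interval, obtained from two CDF evaluations. Maintaining the graph incrementally by adding edges in sorted order keeps the update cost within $W$.

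\emph{Main obstacle.} The main obstacle is positivity, because neither symmetry nor locality obviously forces $w>0$. I expect the intended argument either to build positivity into the notion of a graph weighting function or to use a cloning argument. Everything else is a measure-of-bad-radii estimate.
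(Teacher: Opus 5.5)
Your proposal is correct and follows the paper's proof essentially step for step: the step-function structure of $r\mapsto w(G_r(S))(x)$, a finite union of ``bad'' radius intervals outside which symmetry or locality of $w$ forces equality of the integrands, and the incremental evaluation of Algorithm~\ref{alg:evaluate_f_nu_w}. Your one-sided intervals between $d(x,u)$ and $d(y,u)$ (resp.\ between $d(u,v)$ and $d(\pi u,\pi v)$) are slightly tighter than the paper's symmetric balls and improve the constants by a factor $2$.

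On positivity, your fallback is exactly what the paper does: it simply invokes ``positivity of $w$'', although the definition of a clone-robust graph weighting function lists only symmetry and locality. The cloning route cannot work, because symmetry and locality alone do not force $w>0$. For example, apply the construction $\tilde w$ of Lemma~\ref{lem:generalization_w^CU} to a base rule that is uniform over the minimum-degree vertices of the quotient graph. The result is symmetric and local, yet on the path $a$--$b$--$c$ it gives the middle vertex weight $0$.
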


\begin{algorithm}[!h]
\SetAlgoNoLine
\KwIn{Finite subset $S \subseteq M$, element $x \in S$, graph weighting function $w$, radius $\alpha > 0$, cumulative distribution function $\Gamma$ associated with probability density $\nu$ over $[0,\alpha]$.}
\KwOut{Value of $f_{\nu,w}(S)(x)$.}

Initialize empty dictionary $\mathcal{E}: \mathbb{R} \to 2^{S \times S}$\;
\For{each $y, z \in S$}{
    Compute $r \gets d(y, z)$\;
    \If{$r \leq \alpha$}{
        Append $(y, z)$ to $\mathcal{E}(r)$\;
    }
}
Extract and sort keys of $\mathcal{E}$ in ascending order to obtain distinct radii $0 = r_0 < r_1 < \dots < r_{\ell} < r_{\ell+1} = \alpha$\;

Initialize $f_{\nu,w}(S)(x) \gets 0$\;
Initialize graph $G$ with vertex set $S$ and edge set $\mathcal{E}(r_0)$\;
$\Gamma_{\text{curr}} \gets 0$\;

\For{$i = 0$ to $\ell$}{
    Compute $w_i \gets w(G)(x)$\;
    Compute $\Gamma_{\text{next}} \gets \Gamma(r_{i+1})$\;
    Update $f_{\nu,w}(S)(x) \gets f_{\nu,w}(S)(x) + (\Gamma_{\text{next}} - \Gamma_{\text{curr}}) \cdot w_i$\;
    Update $\Gamma_{\text{curr}} \gets \Gamma_{\text{next}}$\;
    Add edge set $\mathcal{E}(r_{i+1})$ to graph $G$\;
}

\textbf{return} $f_{\nu,w}(S)(x)$\;

\caption{Evaluate $f_{\nu,w}(S)(x)$}
\label{alg:evaluate_f_nu_w}
\end{algorithm}

While the proof of Theorem~\ref{thm:general_weight_function} is deferred to Appendix~\ref{sec:proof_main}, Algorithm~\ref{alg:evaluate_f_nu_w} provides the efficient evaluation procedure for $f_{\nu,w}$ achieving the stated complexity.
With suitable choices of $\nu$ and polynomial-time graph weighting functions like $w^{CU}$, the entire construction remains computationally tractable.

Importantly, this approach is topologically invariant. Indeed, the weighting for a particular set $S$ depends only on the metric $d$, and not on the choice for the embedding space $M.$
As such, it remains invariant under isometric embeddings (e.g., adding dummy dimensions when $M = \mathbb{R}^n$) and offers a uniform framework across all metric spaces, rather than being restricted to Euclidean settings.

\section{Design Space of Graph Weighting Functions}\label{sec:graph_weighting}

So far, we introduced graph weighting functions with a minimal set of axioms--- symmetry and locality---that guarantee the existence of general weighting functions such as in Theorem~\ref{thm:general_weight_function}.
The following lemma demonstrates that this set of axioms is relatively weak, in the sense that it is satisfied by a large family of graph weighting functions.

\begin{lemma}
\label{lem:generalization_w^CU}
Let $w$ be a graph weighting function that satisfies positivity and symmetry. For a graph $G \in \mathcal{G},$ let $G/{\equiv}$ denote the quotient graph obtained by collapsing equivalent vertices $x\sim_G y$ as a unique node $[x]_G.$ Moreover, let $P_G$ denote the lazy random walk kernel defined by $P_G(x\to y) = 1/(1+\deg_G(x))$ for $y\in N_G[x]$ and null otherwise.
We define
$$  \tilde w(G)(x) \;:=\; \frac{w\big(G/{\equiv}\big)\big([x]_{G}\big)}{\big\vert [x]_{G} \big\vert}, \qquad x\in V(G),$$
as well as its smoothed counterpart
$$\hat w(G)(x) \;:=\; \sum_{y\in V(G)} \tilde w(G)(y)\, P_G(y\to x).$$

\noindent Then both $\tilde{w}$ and $\hat{w}$ are clone-robust graph weighting functions.
\end{lemma}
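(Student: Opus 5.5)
The plan is to treat $\tilde w$ first and then obtain $\hat w$ as a post-processing of $\tilde w$ by the lazy kernel. The two structural facts I will use throughout both concern removing a vertex $z$ with $z\equiv_G x$ and $|[x]_G|\ge 2$. First, $(G\setminus\{z\})/{\equiv}$ is isomorphic to $G/{\equiv}$. Second, the equivalence classes of $G\setminus\{z\}$ are those of $G$ with $z$ deleted from $[x]_G$. Both follow because $N_{G\setminus\{z\}}[v]=N_G[v]\setminus\{z\}$ for every $v$, so equality of closed neighbourhoods is preserved for every pair of remaining vertices. If instead $[x]_G=\{x\}$, then $z=x$, the removed vertex is the whole class, and this case has to be handled separately.

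\emph{Symmetry of $\tilde w$.} A graph isomorphism $\sigma$ maps closed neighbourhoods to closed neighbourhoods, hence classes to classes. It therefore induces an isomorphism of the quotients, and symmetry of $w$ on $G/{\equiv}$ together with $|[\sigma x]|=|[x]|$ gives the claim. Normalisation holds because summing $\tilde w(G)$ over a class returns $w(G/{\equiv})([x]_G)$, and summing these over all classes gives $1$.

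\emph{Locality of $\tilde w$ when $|[x]_G|\ge 2$.} For $y\notin N_G[x]$ we have $[y]\neq[x]$, so the class size of $y$ is unchanged. By the first structural fact, $w$ evaluated on the isomorphic quotients agrees at $[y]$ (symmetry of $w$). Hence $\tilde w(G\setminus\{z\})(y)=\tilde w(G)(y)$.

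\emph{Locality of $\tilde w$ when $z=x$ is alone in its class.} Here the quotient genuinely loses a node. I would treat this as the degenerate case of the locality axiom: its intended content is that adding or removing a clone inside an existing class is harmless, which is exactly the usage in Theorem~\ref{thm:general_weight_function}, where a nearby added point creates an equivalent vertex. I would make this restriction explicit, consistent with how $w^{\mathrm{CU}}$ was justified.

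For $\hat w$, symmetry and normalisation are immediate, since $P_G$ is a stochastic kernel defined isomorphism-invariantly. For locality, write
$$\hat w(G)(y)=\sum_{u\in N_G[y]}\frac{\tilde w(G)(u)}{1+\deg_G(u)} .$$
Group the sum by classes $C\subseteq N_G[y]$. Vertices in the same class have the same degree, so each class contributes $w(G/{\equiv})(C)/(1+\deg_G(C))$. The class $[x]$ does not occur in this sum, since $y\notin N[x]$ means $x\notin N[y]$. The numerators are then unchanged by the $\tilde w$ argument above.

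\emph{Main obstacle.} The remaining difficulty is the denominators. For a class $C\subseteq N[y]$ adjacent to $[x]$, the degree drops by one when $z$ is removed. My first attempt would be to show that such contributions are compensated elsewhere, but I expect this to fail in general. If so, the statement only goes through in one of two forms, and I would check the counterexample and adopt the reading the authors' definition intends:
\begin{itemize}
\item the degree in $P_G$ is read in the quotient graph, so that $1+\deg$ becomes the number of classes in the closed neighbourhood; or
\item the claim is restricted to vertices $y$ with $N_G[y]\cap N_G[x]=\emptyset$.
\end{itemize}
Under either reading, every term is invariant and locality of $\hat w$ follows.
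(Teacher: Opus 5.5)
Your argument for $\tilde w$ matches the paper's. Deleting a clone $z\neq x$ leaves the quotient unchanged up to isomorphism and leaves $|[y]|$ unchanged, so symmetry of $w$ gives $\tilde w(G\setminus\{z\})(y)=\tilde w(G)(y)$. The paper likewise only ever removes a vertex distinct from $x$.

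Your restriction to classes of size at least two is forced, not just a matter of reading. For two isolated vertices, symmetry gives each weight $1/2$, but deleting one forces the other to $1$. So locality with $z=x$ a singleton class holds for no $w$.

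One step needs completing. Saying the classes of $G\setminus\{z\}$ are exactly those of $G$ minus $z$ also requires that no new equivalences appear. This holds because $z\in N_G[u]\iff x\in N_G[u]$ for every $u$, so the surviving clone $x$ still separates any pair that $z$ separated.

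For $\hat w$ your diagnosis is right: the claim is false as stated, for every admissible $w$. Take the triangle on $\{x,z,u\}$ with a pendant edge $uy$. The classes are $\{x,z\},\{u\},\{y\}$. Both $G$ and $G\setminus\{z\}$ (the path $x$--$u$--$y$) have quotient $P_3$ with $[u]$ in the middle. On it a positive symmetric $w$ gives weights $(a,b,a)$ with $b>0$. Since $y\notin N_G[x]$, locality would force
\[
\hat w(G)(y)=\frac{b}{4}+\frac{a}{2}
\quad\text{to equal}\quad
\hat w(G\setminus\{z\})(y)=\frac{b}{3}+\frac{a}{2},
\]
which is impossible since they differ by $b/12>0$.

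The paper's proof misses exactly the denominator you isolate. Its displayed computation uses the kernel $P_G$ in both graphs. But $P_{G\setminus\{z\}}(u\to y)\neq P_G(u\to y)$ whenever $u$ is a common neighbour of $x$ and $y$. So the lemma cannot be proved as written, and you were right not to force it.

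Of your two repairs, the second gives only a two-hop locality. Then $\hat w$ would still not be a clone-robust graph weighting function in the paper's sense; fed into Theorem~\ref{thm:general_weight_function}, it yields locality only at distance $2\alpha$.

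The first repair works provided the kernel stays stochastic. If you keep $P(u\to v)$ equal for all $v\in N_G[u]$ but divide by the number $k_u$ of classes in $N_G[u]$, the row sums are $|N_G[u]|/k_u\geq 1$. Renormalising would reintroduce class sizes and break locality again.

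The clean version walks on the quotient $H=G/{\equiv}$ and then spreads uniformly over the target class: $P_G(u\to v)=1/\big((1+\deg_H([u]))\,|[v]|\big)$ for $v\in N_G[u]$. Then $\hat w=\widetilde{w'}$ with $w'(H)(c)=\sum_{c'\in N_H[c]}w(H)(c')/(1+\deg_H(c'))$. This $w'$ is itself positive and symmetric, so locality of $\hat w$ is simply your first part applied to $w'$.
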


Note that the function $w^{CU}$ introduced earlier is in fact a special case of
$\tilde w$, where $w$ corresponds to the uniform weighting function $w(G)(x) = 1/\vert V(G)\vert$ for $x \in V(G)$.

Note that the functions defined in Lemma~\ref{lem:generalization_w^CU} satisfy the locality axiom in a specific, discrete sense: they first identify classes of perfectly equivalent (clone) nodes, collapse them into single representatives, compute weights on the resulting quotient graph, and finally redistribute these weights equally among the original class members. 
As a consequence, perfectly equivalent nodes always receive identical treatment, whereas nodes that differ even by a single neighbor may receive entirely distinct weights. 

This behavior exploits the fact that locality specifically targets equivalent nodes. One might therefore wonder whether a stronger notion of locality could be formulated---one requiring that the addition of neighbors, rather than only equivalent nodes, leaves the weights of distant vertices unchanged.
The following lemma demonstrates that this direction is a dead-end.

\begin{lemma}\label{lem:strict_loc}
No graph weighting function $w$ can simultaneously satisfy \textbf{symmetry}, and the following 
\textbf{strong locality} property: for every graph 
$G \in \mathcal{G}$ and vertices $x,y,z \in V(G)$ with $y \in N_G[x]$ and $z \notin N_G[x]$, we have
$w(G)(z) \;=\; w(G \setminus \! \{y\})(z).$
\end{lemma}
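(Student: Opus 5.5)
The plan is to argue by contradiction. I will build a short chain of small graphs and apply strong locality repeatedly, deleting at each step a neighbor $y\neq x$, so the argument does not rely on the degenerate case $y=x$. At each step I will track the weights of the vertices that are forced to stay fixed, and use symmetry only on graphs with obvious automorphisms. The aim is to compute the weight of an endpoint of the path $P_4$ in two different ways and obtain two different values.

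\emph{Step 1 (the graph $K_1\cup K_2$).} Let $G_1$ have vertices $a,c,d$ and the single edge $c$--$d$. Apply strong locality with $x=c$, $y=d\in N_{G_1}[c]$ and $z=a\notin N_{G_1}[c]$. This gives $w(G_1)(a)=w(\{a,c\})(a)$, where $\{a,c\}$ denotes the edgeless graph on two vertices. The swap $a\leftrightarrow c$ is an automorphism of that graph, so symmetry gives $w(\{a,c\})(a)=1/2$. Hence $w(G_1)(a)=1/2$. The swap $c\leftrightarrow d$ is an automorphism of $G_1$, so $w(G_1)(c)=w(G_1)(d)=1/4$. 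The same computation shows that in any graph consisting of one isolated vertex and one edge, the two endpoints of the edge each receive $1/4$.

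\emph{Step 2 (the path $P_3$).} Let $b$--$c$--$d$ be a path. Take $x=c$, $y=b$ and $z$ equal to $d$? That choice fails, since $d\in N[c]$. Instead take $x=b$, $y=c\in N[b]$ and $z=d\notin N[b]$. Deleting $c$ leaves the edgeless graph $\{b,d\}$, so $w(P_3)(d)=1/2$. The reflection of the path is an automorphism, so the other endpoint $b$ also receives $1/2$.

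\emph{Step 3 (the path $P_4$).} Let $a$--$b$--$c$--$d$ be a path. First take $x=a$, $y=b\in N[a]$, and $z\in\{c,d\}$, noting that $c,d\notin N[a]$. Deleting $b$ leaves $\{a\}\cup\{c\text{--}d\}$, which by Step 1 gives $w(P_4)(d)=1/4$. Next take $x=b$, $y=a\in N[b]$ and $z=d\notin N[b]=\{a,b,c\}$. Deleting $a$ leaves the path $b$--$c$--$d$, which by Step 2 gives $w(P_4)(d)=1/2$. Since $1/4\neq 1/2$, no such $w$ exists.

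The only point needing care is checking each non-adjacency condition $z\notin N_G[x]$ and each automorphism used for symmetry. I expect no real obstacle beyond choosing this particular chain of graphs.
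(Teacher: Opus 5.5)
Your proof is correct, and it takes a genuinely different route from the paper's.

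\textbf{The paper's route.} The paper grows a ten-vertex spider (three legs $a_i$--$b_i$--$c_i$ attached to a center $d$) one leaf at a time, starting from the path $a_1$--$b_1$--$c_1$--$d$. Each time a leaf is added, strong locality freezes the weights of all vertices outside the closed neighborhood of its attachment point, and the reflection of the resulting path determines the remaining weights. Once two legs are complete, $a_1,b_1,a_2,b_2$ already carry total mass $2(w_1+w_2)=1$. The third leg must therefore receive weight $0$, and symmetry of the full spider forces $w_1=w_2=0$.

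\textbf{Your route.} You instead compute the single value $w(P_4)(d)$ through two admissible deletions and obtain $1/4$ versus $1/2$. The paper's starting graph is exactly $P_4$, so your argument shows the contradiction is already present there. Your proof is much shorter, and four vertices is the smallest possible size. On graphs with at most three vertices, your weights for $K_1\cup K_2$ and $P_3$, together with uniform weights on the remaining graphs, satisfy both axioms when $y\neq x$.

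\textbf{What the paper's longer construction buys.} Every deletion in the paper removes a leaf of a connected graph, so only connected graphs ever occur. Your chain deletes cut vertices and passes through $\{a\}\cup\{c\text{--}d\}$ and the edgeless graph on two vertices. The paper's argument would therefore survive if $w$ were only constrained on connected graphs; yours would not directly.

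Like the paper, you never use $y=x$, which is the right choice. Allowing it trivializes the lemma: deleting one vertex of the edgeless graph on two vertices changes the other vertex's weight from $1/2$ to $1$.
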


Lemma~\ref{lem:strict_loc} shows that strengthening locality in a natural way leads to a conflict with symmetry, highlighting a  structural tension between the two principles. At the same time, the axioms introduced so far remain deliberately permissive: symmetry and locality allow for a wide range of admissible weighting rules.
Yet many of these constructions share a common feature: they operate primarily on the equivalence graph, thereby treating perfectly equivalent nodes and highly connected (but not equivalent) neighbors in fundamentally different ways.

To understand why this may be a limitation, we next take a brief detour through interpretability considerations.

\subsection{Interlude: Explainability of Existing Weighting Functions}\label{sec:sharing_measures}

Clone-robust weighting functions are explicitely designed to replace ad-hoc empirical weighting with a principled and transparent procedure. 
By making explicit the assumed notion of distance between items, such functions clarify the designer’s underlying assumptions and ensure that key desirable properties are satisfied. 

However, while weighting functions provide a clearer foundation for constructing weights, they do not by themselves guarantee that the resulting weights are easily interpretable. Since clone-robust weighting functions explicitly require that clones locally share their weight, 
it becomes natural to ask \emph{how much weight two distinct elements share}. 
Quantifying this ``sharedness'' can help us understand how weight is distributed and how elements interact under the weighting rule. 

To study this, we introduce an analytical tool: a \emph{sharing coefficient} between elements of the set~$S$. 
Such a measure should quantify how much two elements effectively affect each others weight,  and  enable comparisons such as determining whether $x$ ``shares more weight'' with $y$ than with $z$.

We first observe that the clone-robust weighting functions introduced in~\cite{berriaud_clone-robust_2025} naturally admit such an interpretable notion of sharing. 
Recall the following weighting function on $(\mathbb{R}^n, d_2)$, defined for a finite subset $S \subset \mathbb{R}^n$ and an element $x \in S$ by
\begin{equation*}
    g_r(S)(x)
    = \frac{1}{\operatorname{Vol}\!\Big(\bigcup_{y\in S} B_r(y)\Big)}
      \int_{B_r(x)} \frac{1}{|S\cap B_r(z)|}\,dz,
\end{equation*}
where $B_r(\cdot)$ denotes the Euclidean ball of radius $r>0$, and $\operatorname{Vol}(X)$ is the $n$-dimensional Lebesgue measure of $X \subseteq \mathbb{R}^n$.

We now introduce a natural \emph{sharing coefficient} associated with $g_r$, which quantifies the portion of voting power that $x$ would recover if every voter $z \in B_r(x)$ were to ignore $y$ when casting its vote.

\begin{definition}[Sharing coefficient of $g_r$]\label{def:shar_coeff_gr}
For a finite set $S \subseteq \mathbb{R}^n$ and distinct elements $x,y$ in $S$, we define the sharing coefficient $\chi_{g_r, S}(x,y)$ as follows, i.e.,
\begin{equation}\label{equ:sharing_coeff_gr}
\begin{aligned}
     \chi_{g_r,S}(x,y)
    &:= \frac{1}{\operatorname{Vol}\!\Big(\bigcup_{y\in S} B_r(y)\Big)} \\
    &\int_{B_r(x) \cap B_r(y)}  
        \frac{1}{|S\cap B_r(z)| \cdot (|S\cap B_r(z)| - 1)} \, dz.
\end{aligned}
\end{equation}
\end{definition}

\paragraph{Well-definedness.}
First, note that, for every $z \in B_r(x) \cap B_r(y)$, the set $S \cap B_r(z)$ contains at least the two elements $x$ and $y$, so the denominator in the integrand is nonzero and the expression is well defined. 
Second, since $S$ is finite, the function 
$z \mapsto |S \cap B_r(z)|$ takes only finitely many integer values, so the integrand is a simple function (i.e., piecewise constant). Thus, it is Lebesgue integrable and $\chi_{g_r,S}(x,y)$ is well-defined.

\paragraph{Positivity and locality.}
Because the integrand is strictly positive wherever $B_r(x)$ and $B_r(y)$ overlap, we have $\chi_{g_r,S}(x,y) \geq 0$, with equality if and only if  $B_r(x) \cap B_r(y)$ has measure zero, that is, when $d(x,y) \geq 2r$. 
This observation is meaningful: $2r$ corresponds precisely to the locality parameter of $g_r$, reflecting the intuition behind the \emph{$2r$-locality axiom} requiring that elements at distance greater than $2r$ do not affect each other.

\paragraph{Additivity and decomposition into private and shared weight.}
Furthermore, the total amount of weight that $x$ shares with other elements is bounded by its own total weight:
\begin{equation*}
    \begin{aligned}
        g_r(S)(x) - \sum_{y \neq x} \chi_{g_r,S}(x,y)
    &= \frac{1}{\operatorname{Vol}\!\Big(\bigcup_{y\in S} B_r(y)\Big)} \\
    &\int_{B_r(x) \setminus \! \bigcup_{y \neq x} B_r(y)}
        \frac{1}{|S\cap B_r(z)|}\, dz
    \;\geq\; 0.
    \end{aligned}
\end{equation*}
Indeed, for any fixed $z \in B_r(x) \cap \bigcup_{y\neq x} B_r(y)$,  each of the points $y \in S\cap B_r(z)$ (with $y \neq x$) removes a fraction  $(|S\cap B_r(z)| - 1)^{-1}$ of the total voting share $|S\cap B_r(z)|^{-1}$, effectively reducing the remainder to zero.

This observation allows us to extend the definition of $\chi$ to the case $x=y$ and to define the \emph{private weight} of $x$ by
$\chi_{g_r,S}(x,x) := g_r(S)(x) - \sum_{y \neq x} \chi_{g_r,S}(x,y).$
Equivalently, the total weight of $x$ decomposes  into its private and shared parts:
$$g_r(S)(x) = \chi_{g_r,S}(x,x) + \sum_{y \neq x} \chi_{g_r,S}(x,y).$$
This expresses an \emph{additive property} of the sharing coefficients $\chi_{g_r,S}$.

\paragraph{Effect of element removal.}
The above decomposition also clarifies how weights evolve when an element is removed. By definition, removing $x$ eliminates only its private weight, while its shared weight is redistributed among neighboring elements. As a result, the weights in the reduced set $S\setminus \!\{x\}$ satisfy
\begin{equation}\label{equ:gr_removal_effect}
g_r(S\setminus \!\{x\})(y)
    = \big(g_r(S)(y) + \chi_{g_r,S}(x,y)\big)\cdot\big(1 + \eta_{r,S,x}\big),
\end{equation}
where the non-negative constant $\eta_{r,S,x}$ is defined as
$$
    \eta_{r,S,x} = \frac{\chi_{g_r,S}(x,x)} {\operatorname{Vol}\!\left(\bigcup_{y\in S} B_r(y)\right) - \chi_{g_r,S}(x,x)} \geq 0.
$$
Hence, the effect of removing $x$ on the weight of another element $y$ naturally decomposes into two distinct components: an \emph{additive local increase} determined by the sharing coefficient $\chi_{g_r,S}(x,y)$, and a \emph{global multiplicative rescaling} governed by the private weight $\chi_{g_r,S}(x,x)$. 
Notably, both terms are nonnegative, so the removal of any element $x$ necessarily increases the weights of all remaining elements.

\paragraph{Sharing monotonicity.}
At first glance, one might expect that a clone-robust weighting function--that is, a local sharing scheme--would enforce a simple form of distance-based monotonicity: elements should share more weight with closer neighbors than with more distant ones. Formally, one could hope that $d(x,y) \le d(x,z)$ would imply $\chi_S(x,y) \ge \chi_S(x,z)$ for an appropriate sharing measure~$\chi$. 

However, the presence of clones complicates this intuition, especially if we require \emph{additivity} of shared weight. If the total amount of weight an element shares must remain additive across all its clones, then this amount necessarily decreases as clones are added. Consequently, $\chi_{g_r,S}(x,y)$ may be smaller than $\chi_{g_r,S}(x,z)$ even though $y$ is closer to $x$ than $z$ is. This behavior is illustrated in Figure~\ref{fig:illustration_chi_dominance}.

\begin{figure}[!tbh]
    \centering
    \begin{tikzpicture}[scale=1.2]

        \definecolor{PaperBlue}{HTML}{1f77b4}
        \definecolor{PaperOrange}{HTML}{ff7f0e}
        \definecolor{PaperGreen}{HTML}{2ca02c}
        \definecolor{PaperRed}{HTML}{d62728}

        \coordinate (X) at (0,0);        
        \coordinate (Y_1) at (-1.6, 0.05); 
        \coordinate (Y_2) at (-1.6, -0.05);   
        \coordinate (Z) at (1.8,0);  
        \fill[PaperBlue] (X) circle (1pt) node[above] {$x$};
        \fill[PaperGreen] (Y_1) circle (1pt) node[above left] {$y_1$};
        \fill[PaperOrange] (Y_2) circle (1pt) node[below left] {$y_2$};
        \fill[PaperRed] (Z) circle (1pt) node[above right] {$z$};

        \draw[ PaperBlue] (X) circle (1.5cm);
        \fill[PaperBlue, opacity=0.04] (X) circle (1.5cm);
        \draw[ PaperGreen] (Y_1) circle (1.5cm);
        \fill[PaperGreen, opacity=0.04] (Y_1) circle (1.5cm);
        \draw[ PaperOrange] (Y_2) circle (1.5cm);
        \fill[PaperOrange, opacity=0.04] (Y_2) circle (1.5cm);
        \draw[ PaperRed] (Z) circle (1.5cm);
        \fill[PaperRed, opacity=0.04] (Z) circle (1.5cm);
        
        \node at (-0.8,0.2) { $\chi_{g_r,S}(\textcolor{PaperBlue}{x},\textcolor{PaperGreen}{y_1}) $}; 
        \node at (-0.8,-0.2) { $\propto \frac{1}{3}\cdot\frac{1}{2}$}; 
        \node at (0.9,0.2) { $\chi_{g_r,S}(\textcolor{PaperBlue}{x},\textcolor{PaperRed}{z}) $}; 
        \node at (0.9,-0.2) { $\propto \frac{1}{2}\cdot\frac{1}{1}$};

    \end{tikzpicture}
    \caption{Consider the set $S =\{x,y_1,y_2,z\}.$ Since $y_1$ is closer to $x$ than $z$, it shares more potential voters with $x$ than $z$ does, as noted by the volume inequality  $\operatorname{Vol\big( B_r(x)\cap B_r(y_1) \big)} \geq \operatorname{Vol \big( B_r(x)\cap B_r(z) \big)}. $ However, the presence of a clone $y_2$ dilutes $y_1$'s sharing coefficient and $ \chi_{g_r,S}(x,y_1) \leq \chi_{g_r,S}(x,z).$
    }
    \label{fig:illustration_chi_dominance}
\end{figure}
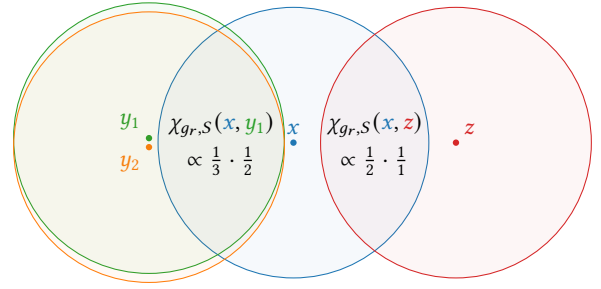

This does not mean that all forms of dominance fail; rather, it shows that a stronger notion of dominance than mere comparison of Euclidean distances is needed. 
In particular, the sharing coefficient $\chi_{g_r}$ satisfies the following form of monotonicity, based on inclusion of intersecting regions.

\begin{lemma}[Sharing Domination]\label{lem:sharing_domination_gr}
Let $S \subseteq \mathbb{R}^n$ be a finite set, 
and let $x,y,z \in S$ be three distinct elements such that 
$y$ \textbf{intersection-dominates} $z$ with respect to $x$, in the sense that
$$B_r(x) \cap B_r(z) \subseteq B_r(x) \cap B_r(y).$$
Then $y$ shares at least as much weight with $x$ as $z$ does, that is,
$$ \chi_{g_r,S}(x,y) \ge \chi_{g_r,S}(x,z).$$
\end{lemma}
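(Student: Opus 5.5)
The proof compares the two integrals in Definition~\ref{def:shar_coeff_gr} directly. Both $\chi_{g_r,S}(x,y)$ and $\chi_{g_r,S}(x,z)$ carry the same normalizing factor $1/\operatorname{Vol}\big(\bigcup_{u\in S} B_r(u)\big)$, since it depends only on $S$ and $r$. So it suffices to compare the two unnormalized integrals. These share the same integrand,
$$h(u) \;:=\; \frac{1}{|S\cap B_r(u)|\,\big(|S\cap B_r(u)|-1\big)},$$
and differ only in their domains of integration, namely $B_r(x)\cap B_r(y)$ and $B_r(x)\cap B_r(z)$.

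The key step is to check that $h$ is well defined and nonnegative on the larger domain $B_r(x)\cap B_r(y)$. For $u\in B_r(x)\cap B_r(y)$, both $x$ and $y$ lie in $S\cap B_r(u)$, because $d(u,x)\le r$ and $d(u,y)\le r$. Hence $|S\cap B_r(u)|\ge 2$, which is exactly the well-definedness observation made after the definition. On this domain $h$ is therefore a finite positive simple function. On the subset $B_r(x)\cap B_r(z)$ the same integrand also coincides with the one used in $\chi_{g_r,S}(x,z)$, so both coefficients are obtained by integrating one and the same measurable nonnegative function $h$.

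Next, I use the hypothesis $B_r(x)\cap B_r(z)\subseteq B_r(x)\cap B_r(y)$ to split the larger domain:
$$\int_{B_r(x)\cap B_r(y)} h \;=\; \int_{B_r(x)\cap B_r(z)} h \;+\; \int_{(B_r(x)\cap B_r(y))\setminus (B_r(x)\cap B_r(z))} h.$$
The second term is nonnegative because $h>0$ there. Dividing by the common volume yields $\chi_{g_r,S}(x,y)\ge\chi_{g_r,S}(x,z)$.

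There is no real obstacle here. The only point needing care is that $h$ depends on $u$ only through $|S\cap B_r(u)|$, and not on which of $y$ or $z$ is the designated partner. This is what lets the integrands agree on the overlap and makes monotonicity of the integral over nested domains apply directly. Measurability is already settled by the simple-function remark in the well-definedness paragraph.
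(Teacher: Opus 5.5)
Your proof is correct and follows the paper's argument: both coefficients integrate the same nonnegative integrand with the same normalization, so the inclusion $B_r(x)\cap B_r(z)\subseteq B_r(x)\cap B_r(y)$ gives the inequality by monotonicity of the integral over nested domains. Your explicit splitting of the larger domain is the precise form of what the paper abbreviates as ``monotonicity of Lebesgue measure''; the measure inequality alone would not suffice for a non-constant integrand.
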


\begin{proof}
Since the integrands in $\chi_{g_r,S}(x,y)$ and $\chi_{g_r,S}(x,z)$ are identical and nonnegative, 
the claim follows directly from the monotonicity of Lebesgue measure, as $\mu(B_r(x)\cap B_r(z)) \leq \mu(B_r(x)\cap B_r(y))$.
\end{proof}

Note that \emph{intersection-dominance} is indeed a stronger notion than simple \emph{distance-dominance}. This is because the function $\eta \mapsto \operatorname{Vol}\big(B_r(0) \cap B_r(x(\eta))\big)$ is weakly decreasing on $[0,2r]$, where $x(\eta)$ is a point at a distance $\eta$ from the origin, i.e., $d_2(x(\eta),0) = \eta.$

\paragraph{Extension to the family $f_\nu$}
All the above constructions extend naturally to the second family of weighting functions introduced in~\cite{berriaud_clone-robust_2025}. 
Given a probability distribution $\nu$ supported on $[0,\alpha]$, this family is defined by
$$ f_\nu(S)(x) = \int_0^{\alpha} \nu(r)\, g_r(S)(x)\,dr.$$
We then extend the sharing coefficient of $g_r$ to that of $f_\nu$ by similarly integrating $\chi_{g_r}$ over all radii.

\begin{definition}[Sharing coefficient of $f_\nu$]
\label{def:shar_coeff_fnu}
Let $S \subseteq \mathbb{R}^n$ be finite, and let $x,y \in S$. The sharing coefficient of $f_\nu$ is defined by
\begin{equation*}\label{equ:sharing_coeff_fnu}
    \chi_{f_\nu,S}(x,y)
    := \int_0^{\alpha} \nu(r)\, \chi_{g_r,S}(x,y)\,dr.
\end{equation*}
Similarly, the private weight of $x$ is given by
\begin{equation*}\label{equ:private_weight_fnu}
    \chi_{f_\nu,S}(x,x)
    := \int_0^{\alpha} \nu(r)\, \chi_{g_r,S}(x,x)\,dr.
\end{equation*}
\end{definition}

Note that both $\chi_{f_\nu,S}(x,y)$ and $\chi_{f_\nu,S}(x,x)$ are well defined. Each $\chi_{g_r,S}(x,y)$ is indeed bounded by $g_r(S)(x)$, which is in turn bounded by one, and the map $r\mapsto \nu(r)\; \chi_{g_r,S}(x,y)$ is integrable.
Moreover, most qualitative properties of $\chi_{g_r}$ directly carry over to $\chi_{f_\nu}$:
\begin{itemize}
    \item In particular, $\chi_{f_\nu,S}(x,y)$ is non-negative, and null exactly when $d(x,y) \geq 2\alpha$, its own locality parameter;
    
    \item By Fubini’s theorem, integration and summation can be interchanged, and $f_\nu(S)(x)$ can also be decomposed into private and shared parts, i.e.,
    $$ f_\nu(S)(x) = \chi_{f_\nu,S}(x,x) - \sum_{y\neq x} \chi_{f_\nu,S}(x,y);$$
    
    \item The \emph{sharing dominance} property is preserved for $f_\nu$ in the following manner:  if $y$ intersection-dominates $z$ with respect to $x$ for every $r\in[0,\alpha]$, then 
    $$ \chi_{f_\nu,S}(x,y) \geq \chi_{f_\nu,S}(x,z).$$
\end{itemize}

The only property that does not directly extend to $f_\nu$ is the simple multiplicative form of the element-removal effect established for $g_r$. 
Integrating both sides of Equation~\eqref{equ:gr_removal_effect} with respect to~$\nu$ gives
\begin{equation*}\label{equ:fnu_effect_removal}
\begin{aligned}
    f_\nu(S\setminus \! \{x\})(y)
    &= f_\nu(S)(y) + \chi_{f_\nu,S}(x,y) 
    + \int_0^\alpha \nu(r)\,\eta_{r,S,x}\,\big(g_r(S)(y) + \chi_{g_r,S}(x,y)\big)\,dr,\\
    &\neq \big(f_\nu(S)(y) + \chi_{f_\nu,S}(x,y)\big) \cdot
       \int_0^\alpha \nu(r)\,\eta_{r,S,x}\,dr.
\end{aligned}
\end{equation*}
Thus, removing an element $x$ still induces a \emph{local additive increase} in the weight of each remaining element~$y$, but the global rescaling effect is in general no longer purely multiplicative.
Nevertheless, all remaining weights still increase when an element is removed.

\medskip

The sharing measures defined above arguably deepen our understanding of the weighting schemes introduced in previous work.
A natural next step is to ask whether analogous sharing coefficients can be defined for the graph-based weighting functions developed here.
More broadly, co-developing weighting functions and their associated sharing measures--so that both exhibit complementary and desirable properties--may help enable a more transparent and interpretable weighting process.

\subsection{Sharing Coefficient for Graph Weighting Functions}\label{sec:sharing_coeff_graphs}

Given that the locality axiom for graph weighting functions requires that the weight of a node depend only on its neighborhood, 
a natural starting point for defining a sharing coefficient~$\chi$ is to impose that two non-adjacent vertices share no weight, i.e.,
$$\text{if } (x,y)\notin E(G), \quad \chi_{w,G}(x,y)=0.$$
Beyond this basic locality constraint, a sharing coefficient should quantify the influence that one node exerts on the weight of another. There are two natural ways to capture such influence. 
One option is to follow the approach in Section~\ref{sec:sharing_measures} and measure the change in weights resulting from the \emph{removal of a vertex}. 
Graphs, however, offer an additional operation compared to metric spaces, and one could also measure the change of weight resulting from the \emph{deletion of edges}. 
We focus hereafter on the former.

Intuitively, when a node is removed---as in Equation~\eqref{equ:gr_removal_effect}---the total weight must be renormalized, which introduces a multiplicative correction factor accounting for the loss of an uncovered region. 
Unlike the case of $g_r$, where this renormalization constant appears explicitly, a general graph weighting function~$w$ does not naturally provide such a term. 
To address this, we take the opposite perspective and postulate a multiplicative relation similar to Equation~\eqref{equ:gr_removal_effect} as a desirable property for a sharing coefficient.

This leads us to the following structural assumption.

\begin{axiom}[Multiplicative rescaling]\label{axi:mult_rescale}
    A graph weighting function $w$ satisfies multiplicative rescaling if, for each finite graph $G$ and vertex $x$, there exists a non-negative constant $\eta_{G,x}\geq 0$ such that 
    $$ w(G\setminus\!\{x\})(z) = w(G)(z) \cdot (1+ \eta_{G,x})$$
    holds for every non-neighboring vertex $z \in V(G)\setminus\! N_G[x].$
    In particular, for a connected component $C$ of $G$ and node $x \in G\setminus C$, we have
    $$ w(G\setminus C)(x) = \frac{w(G)(x)}{\sum_{y\in C} w(G)(y) }.$$
     
\end{axiom}
Note that this axiom implies that the weight of any vertex $z$ not adjacent to $x$ weakly increases when $x$ is removed.

Assuming that a graph weighting function $w$ satisfies Axiom~\ref{axi:mult_rescale}, 
we can now define an associated sharing coefficient.

\begin{definition}[Sharing Coefficient for Graph Weighting Function]\label{def:vertex_shar_coeff}
Let $w$ be a clone-robust graph weighting function verifying Axiom~\ref{axi:mult_rescale}, $G \in \mathcal{G}$ be a finite graph, and $x,y$ be two distinct vertices in $V(G).$
We define the vertex-based sharing coefficient $\chi_{w}$ as follows, i.e.,
\begin{equation}\label{equ:vertex_based_sharing_coeff}
    \chi_{w,G}(x,y)
    := \frac{w(G\setminus \!\{x\})(y)}{1 + \eta_{G,x}} - w(G)(y) ,
\end{equation}
where the non-negative constant $\eta_{G,x}$ is  defined as
$$ \eta_{G,x} :=
\begin{cases}
\frac{w(G \setminus \{x\})(z)}{w(G)(z)} - 1 & \text{if there exists } z \in  V(G)\setminus\! N_G[x], \\
0 & \text{otherwise.}
\end{cases}$$
\end{definition}

By construction, $\chi_{w,G}^V(x,y)=0$ whenever $y$ is not adjacent to $x$, ensuring consistency with the locality axiom. 
Imposing that $\chi_w$ remains nonnegative, as in Section~\ref{sec:sharing_measures}, leads naturally to the following condition for $w$.

\begin{axiom}[Non-negative vertex sharing]\label{axi:non_neg_chi_vertex}
    A clone-robust graph weighting function $w$ that verifies Axiom~\ref{axi:mult_rescale} also verifies non-negative vertex-sharing if, for each finite graph $G$ and distinct vertices $x,y\in V(G)$, it holds that 
    $$w(G\setminus \!\{x\})(y) \geq w(G)(y) \cdot (1 + \eta_{G,x}) ,$$
    or equivalently $\chi_{w,G}(x,y)\geq 0 .$
\end{axiom}

Note that the total amount of weight that $x$ shares with other vertices is again bounded by its own total weight:
$$w(G)(x) - \sum_{y\neq x} \chi_{w,G}(x,y)= \frac{\eta_{G,x}}{1+\eta_{G,x}} \geq 0.$$
We may therefore define the \emph{private weight} of $x$ as
$$\chi_{w,G}(x,x):= \frac{\eta_{G,x}}{1+\eta_{G,x}}, $$
thus recovering an additive decomposition property analogous to that of Section~\ref{sec:sharing_measures}, i.e.,
$$w(G)(x) = \chi_{w,G}(x,x) +  \sum_{y\neq x} \chi_{w,G}(x,y).
$$

Unlike $\chi_{g_r}$, the  sharing coefficient $\chi_{w}$ is  a priori not symmetric.
Imposing symmetry therefore motivates the introduction of the following additional condition.

\begin{axiom}[Sharing Symmetry]\label{axi:vertex_shar_sym}
    A graph weighting function $w$ satisfies sharing symmetry if, for each finite graph $G$  and  distinct vertices $x,y \in V(G)$, it holds that
    $$\frac{w(G\setminus \!\{x\})(y)}{1 + \eta_{G,x}} - \frac{w(G\setminus \!\{y\})(x)}{1 + \eta_{G,y}}  = w(G)(y) - w(G)(x), $$
    or equivalently $\chi_{w,G}(x,y) = \chi_{w,G}(y,x).$
\end{axiom}

A third desirable principle, inspired by the geometric case of Section~\ref{sec:sharing_measures}, is that of \emph{sharing domination}. 
We adapt the intersection-dominance condition of Lemma~\ref{lem:sharing_domination_gr} to graphs by defining,
for distinct vertices $x,y,z \in V(G)$, the following order relation, i.e.,
$$y \succeq_x z \;\;\Longleftrightarrow\;\; N_G[x] \cap N_G[z] \subseteq N_G[x] \cap N_G[y],$$
that is, all common neighbors between $x$ and $z$ are also neighbors of $y$ (thus common neighbors with $z$).
This leads to the following requirement for graph weighting functions.

\begin{axiom}[Sharing Domination]\label{axi:vertex_sharing_domination}
A graph weighting function $w$ satisfies sharing domination if, for every finite graph $G$ and distinct vertices $x,y,z\in V(G)$ such that $y \succeq_x z$,
it holds that
$$w(G\setminus \!\{x\})(y) - w(G\setminus \!\{x\})(z)   
\geq \big(w(G)(y) - w(G)(z) \big) \cdot(1 + \eta_{G,x}), $$
or equivalently $\chi_{w,G}(x,y) \geq \chi_{w,G}(x,z).$
\end{axiom}

Now one question remains: is it possible to construct graph weighting functions that satisfy Axioms~\ref{axi:mult_rescale}, \ref{axi:non_neg_chi_vertex}, \ref{axi:vertex_shar_sym} and~\ref{axi:vertex_sharing_domination}; in other words, that admit a meaningful sharing coefficient?
Let's first consider the case of the class uniform weighting function.

\begin{example}
The first thing to verify is whether $w^{\mathrm{CU}}$ verifies the multiplicative scaling in Axiom~\ref{axi:mult_rescale}. It is relatively clear to see that this holds: for a graph $G=(V,E)$, the removal of any node $x$ leads to a rescaling of the weight of every remaining node by a factor of $ \big\vert V/{\equiv_G} \big\vert~ / ~\big\vert V/{\equiv_{G\setminus\{x\}}} \big\vert .$
However, the positive results end here: Figure~\ref{fig:counter_exple_W^CU} presents a graph $G$ whose weighting by $w^{\mathrm{CU}}$  violates both Axioms~\ref{axi:non_neg_chi_vertex} and~\ref{axi:vertex_shar_sym}. 
Indeed, upon removal of node $a$, the reweighting factor becomes 
$$1 + \eta_{G,a} = \frac{w^{\mathrm{CU}}(G \setminus \{a\})(c)}{w^{\mathrm{CU}}(G)(c)} = 2.$$ 
Hence the sharing coefficient from $a$ to $b$ is 
$$ \chi_{w^{\mathrm{CU}},G}(a,b)
    = \frac{w^{\mathrm{CU}}(G\setminus \!\{a\})(b)}{1 + \eta_{G,a}} - w^{\mathrm{CU}}(G)(b)
    = \frac{1}{3\cdot 2} -\frac{1}{3} = -\frac{1}{6},$$
and $w^{\mathrm{CU}}$ does not verify Axiom~\ref{axi:non_neg_chi_vertex}.

Moreover, we can similarly compute that the reweighting factor upon removal of $b$ is $1 + \eta_{G,a} = 1$ since $b$ is connected to all other nodes. Hence $b$ shares with $a$ weight $$ \chi_{w^{\mathrm{CU}},G}(b,a)
    = w^{\mathrm{CU}}(G\setminus \!\{b\})(a) - w^{\mathrm{CU}}(G)(a)
    = \frac{1}{2}- \frac{1}{3}= \frac{1}{6} \neq \chi_{w^{\mathrm{CU}},G}(a,b),$$
thus Axiom~\ref{axi:vertex_shar_sym} is also violated.

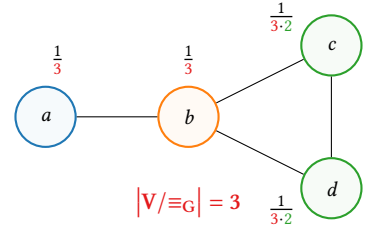
\begin{figure}[h]

\centering
\begin{tikzpicture}[scale=0.95,
  every node/.style={circle, thick, draw, minimum size=8mm, inner sep=0pt, font=\small}
]
    \definecolor{PaperBlue}{HTML}{1f77b4}
    \definecolor{PaperOrange}{HTML}{ff7f0e}
    \definecolor{PaperGreen}{HTML}{2ca02c}
    \definecolor{PaperRed}{HTML}{e41a1c}      
    \definecolor{PaperPurple}{HTML}{6a0dad}   
    \definecolor{PaperPink}{HTML}{ff69b4}     

    \node[draw=PaperBlue, fill=PaperBlue!4] (2) at (1.732,0) {a};
    \node[draw=PaperOrange, fill=PaperOrange!4] (3) at (3.732,0) {b};
    \node[draw=PaperGreen, fill=PaperGreen!4] (4) at (5.732, 1) {c};
    \node[draw=PaperGreen, fill=PaperGreen!4] (5) at (5.732, -1) {d};

\node[draw=none, circle=none, font=\normalsize\bfseries, text = PaperRed] at ( 3.732, -1.2) {$ \mathbf{\big\vert V/{\equiv_G} \big\vert = 3}$ };

    \node[draw=none, circle=none, font=\normalsize] at (1.9, 0.8) {$\frac{1}{\textcolor{PaperRed}{3} }$}; 
    \node[draw=none, circle=none, font=\normalsize] at (3.732, 0.8) {$\frac{1}{\textcolor{PaperRed}{3} }$}; 
    \node[draw=none, circle=none, font=\normalsize] at (5.032, 1.4) {$\frac{1}{\textcolor{PaperRed}{3} \cdot \textcolor{PaperGreen}{2}}$}; 
    \node[draw=none, circle=none, font=\normalsize] at (5.032, -1.3) {$\frac{1}{\textcolor{PaperRed}{3} \cdot \textcolor{PaperGreen}{2}}$}; 
  \draw (2) -- (3)-- (4)-- (5) -- (3);

\end{tikzpicture}

\caption{Simple instance of graph $G$ showing that $w^{\mathrm{CU}}$ does not satisfy Axioms~\ref{axi:non_neg_chi_vertex} and~\ref{axi:vertex_shar_sym}. 
}
\label{fig:counter_exple_W^CU}
\end{figure}

\end{example}

An intuitive explanation for the failure of Axiom~\ref{axi:non_neg_chi_vertex} is that removing a node may change the equivalence classes not only of neighboring nodes (that of $b$ when $a$ is removed), but also that of nodes at distance two in the graph, e.g., nodes $c$ and $d$ in the example.
Because $b$ is now equivalent to $c$ and $d$ when $a$ is removed from $G$, it must share much of its weight with them, and hence its weight decreases in the new graph, leading to a negative sharing coefficient.

This phenomenon suggests that any graph weighting rule based primarily on equivalence classes will struggle to satisfy non-negative sharing while maintaining both locality and symmetry.
To overcome this difficulty, one must instead design weighting functions that operate directly on neighborhoods rather than on equivalence classes.

\subsection{Constructions based on Maximal Clique Covers}\label{sec:MCC_weighting}

One approach to constructing neighborhood-based graph weighting functions relies on the following observation: the set of maximal cliques of a graph, denoted by $\mathcal{K}(G)$, is invariant under the removal of equivalent nodes. 
Indeed, for any finite graph $G$ and two equivalent vertices $x \equiv_G y$, we have, i.e.,
$$\mathcal{K}(G\setminus\{x\}) = \{\,K \setminus\{x\} \mid K \in \mathcal{K}(G)\,\}.$$
Moreover, $\mathcal{K}(G)$ forms a cover of the vertex set $V(G)$, that is, 
$\bigcup_{K \in \mathcal{K}(G)} K = V(G)$. 
We refer to such a cover of $V(G)$ by maximal cliques as a \emph{maximal-clique-cover (MCC)}.

Starting from an MCC--say $\mathcal{K}(G)$--one can construct a clone-robust weighting function by assigning weight uniformly to each maximal clique and then distributing each clique’s weight among its constituent nodes. 
A key design question is how to handle vertices that belong to multiple maximal cliques. 
The most straightforward approach is to divide each clique’s weight uniformly among its vertices and compute each node’s total weight as the sum of its contributions across cliques.
Formally, we define the \emph{additive maximal-clique weighting rule} $w^{\mathrm{MCCA}}$ as follows, i.e.,
$$w^{\mathrm{MCCA}}(G)(v) = \frac{1}{\vert \mathcal{K}(G)\vert}
\sum_{K \in \mathcal{K}(G):\, v \in K} \frac{1}{ \vert K\vert},
\qquad v \in V(G).$$

Alternatively, one can introduce a notion of varying participation: 
each node contributes equally to all cliques it belongs to, and the total weight within each clique is then shared proportionally to the participation of a node within that clique. 
The resulting node weights are obtained by summing each node’s contributions across all cliques. 
Formally, we define for each vertex $v\in V(G)$ the number of maximal cliques it belongs to as $c_v := \vert \{K \in \mathcal{K}(G) \mid v \in K\}\vert .$ We then interpret $1/c_v$ as the fraction of participation of $v$ in each clique it belongs to, and compute the total participation within a clique $K$ as $P_K := \sum_{v \in K} \frac{1}{c_v}.$
Then each maximal clique gets total mass $1/\vert \mathcal{K}(G)\vert$ and redistributes it proportionally to the participation of its nodes, leading to
$$w^{\mathrm{MCCP}}(G)(v) =\frac{1}{\vert \mathcal{K}(G)\vert} \sum_{K \in \mathcal{K}(G):\, v \in K} \frac{1}{ c_v \cdot P_K}, \qquad v \in V(G).$$

Note that both $w^{\mathrm{MCCA}}$ and $w^{\mathrm{MCCP}}$ are clearly clone-robust graph weighting functions: they verify symmetry because $\mathcal{K}(G)$ is invariant under graph isomorphism, and locality since the removal of a clone only affects the weights of nodes within the same cliques, i.e., its neighbors.

\smallbreak
Do these rules however satisfy non-negative sharing (Axiom~\ref{axi:non_neg_chi_vertex})? 
Unfortunately, they do not. 
Consider again the graph in Figure~\ref{fig:counter_exple_W^CU}. 
The additive rule $w^{\mathrm{MCCA}}$ yields weights 
$\big(\frac{1}{4}, \frac{5}{12}, \frac{1}{6}, \frac{1}{6}\big)$ 
for nodes $a,b,c,d$, whereas the participation rule $w^{\mathrm{MCCP}}$ gives 
$\big(\frac{1}{3}, \frac{4}{15}, \frac{1}{5}, \frac{1}{5}\big).$
The corresponding reweighting factors $1+\eta_{G,a}$ are $2$ and $\frac{5}{3}$ respectively, leading to negative sharing coefficients for both $w^{\mathrm{MCCA}}$ and $w^{\mathrm{MCCP}}$, i.e.,
$$
\chi_{w^{\mathrm{MCCA}},G}(a,b) = -\frac{1}{4},
\qquad
\chi_{w^{\mathrm{MCCP}},G}(a,b) = -\frac{1}{15}.
$$

For $w^{\mathrm{MCCA}}$, the problem clearly stems from the fact that node $b$ accumulates too much weight from belonging to two distinct cliques. 
The participation rule $w^{\mathrm{MCCP}}$ mitigates this effect, but not sufficiently. 
The difficulty in this approach is that the weight that $b$ receives from each clique must be independent of one another so as to respect locality.

One might attempt to avoid this difficulty by replacing clique covers with \emph{clique-partitions}, so that each node only receives weight from a single clique. 
This is not so simple however: symmetry will require randomizing over such partitions. But naively doing this, e.g., by taking random orders over the cliques in $\mathcal{K}(G)$ to construct a clique-partition, ultimately collapses back to the participation-based rule $w^{\mathrm{MCCP}}.$

Faced with these difficulties, we next propose to reverse the construction: 
rather than choosing clique covers and inducing a distribution on the nodes, we let probability distributions themselves determine the most informative clique partitions. 
This shift in perspective leads naturally to an information-theoretic formulation.

\subsection{A Neighborhood-Based Maximum Entropy Principle}\label{sec:entropy_constr}

Since the assignment of graph weights is essentially a problem of determining epistemic probabilities, a natural starting point is the \emph{principle of maximum entropy}. 
However, this cannot be directly applied to our framework. In classical statistics, one maximizes Shannon entropy for a distribution over a single fixed set, subject to verifiable constraints; in contrast, we seek to assign distributions to every finite graph simultaneously while respecting the locality axiom. Because locality is an inter-graph constraint, a weight assignment that is ``optimal'' for one graph might be constrained by its relationship to another.

One way to circumvent this issue is to seek for entropy measures different from Shannon's that are inherently local.
One straightforward attempt is to replace global Shannon entropy by a local entropy that aggregates probabilities by equivalence class.

Let $\mathcal{E}(G)$ denote the partition of $V(G)$ into equivalence classes, and for $\pi \in \Delta(V(G))$ define
$$ p_E := \sum_{x \in E} \pi(x),  \quad E \in \mathcal{E}(G).$$
We then define the class-Shannon entropy $H_G^E$ as the Shannon entropy of the class aggregated distribution $(p_E)_{E\in\mathcal{E}(G)}$, i.e.,
$$ H^E_G(\pi) = -\sum_{E \in \mathcal{E}(G)} p_E \log_2 p_E$$
Maximizing this class-entropy (and then Shannon entropy within classes) yields the class-uniform rule $w^{\mathrm{CU}}$, but it still relies strictly on equivalence classes and therefore fails to address the instability described in the previous section. 

A more promising direction that avoids the equivalence-class pathology is to consider Polsner’s epsilon-entropy \cite{posner_epsilon_1972}, the definition of which we recall below.

Let  $S\subseteq M$ be a finite subset and consider the 
measurable space $(S, \Sigma)$ where $\Sigma = 2^S$ is the discrete $\sigma$-algebra; we let $ \mu:\Sigma \mapsto [0,1]$ be a discrete probability measure.
For $\epsilon >0$, let $\mathcal{A}_\epsilon (S)$ be the set of all partitions of $S$ in sets of diameter smaller than $\epsilon$, i.e., a partition $U = \{U_i\}_{i\in I}$  of $S$ belongs in  $\mathcal{A}_\epsilon (S)$ if and only if $\operatorname{diam}(U_i) = \max_{x,y \in U_i} d(x,y) \leq \epsilon.$

The entropy of a partition $\{U_i\}_{i\in I} \in \mathcal{A}_\epsilon (S)$ is defined as the Shannon entropy of the probability distribution consisting of the measure of the sets in the partition, i.e.,
\begin{equation*}
    H(U, \mu) := - \sum_{i\in I} p_i  \log_2 p_i ,
\end{equation*}
where $p_i = \mu(U_i) = \sum_{x\in U_i} \mu(x)$, and we use the convention $p \log_2( 1/p) = 0$ when evaluated in $p=0.$ 
Intuitively, this corresponds to the number of bits necessary to transmit the information so as to which set of $U$ the outcome fell into.

Since the partition $U$ may not have been particularly suited to transmit the outcome, we finally define the $\epsilon$-entropy $H_\epsilon(\mu)$ as the minimum Shannon entropy over all possible admissible partitions $U$ in the finite set $\mathcal{A}_\epsilon(S)$, i.e.,
\begin{equation*}
    H_\epsilon(\mu) := \min_{U \in\mathcal{A}_\epsilon(S)} H(U,\mu).
\end{equation*}
Clearly, the epsilon entropy is non-negative and bounded by the cardinality of the support, i.e., $0\leq H_\epsilon(\mu)\leq \vert S\vert .$

One important observation is that sets of diameter smaller than $\epsilon$ are exactly cliques in the graph $G_\epsilon(S)$, and vice-versa. For a graph $G=(V,E)$, it is hence natural to consider the set of \emph{clique-partitions} $\mathcal{C}(G)$, that is of partitions $\{U_i\}_{i\in I}$ of the vertex set $V$ where the restriction of the graph on each part $G[U_i]$ is a clique, for each $i\in I$. 
We can then associate with entropy $H_\epsilon$ the graph entropy defined for a probability distribution $\pi \in \Delta(V)$ by 
$$H_G(\pi) := \min_{U \in \mathcal{C}(G)} H(U, \pi) .$$
so as to verify the relation  $H_{G_\epsilon(S)}(\pi) = H_\epsilon(\pi) $ (we slightly abuse notation here and identify the probability measure $\mu$ with the probability distribution $\pi$ it induces on the singletons of $\Sigma$).

Importantly, the map $\pi \in\Delta(V) \mapsto H_G (\pi)$ is the point-wise minimum of strictly concave functions, hence it is concave. Following the intuition of the maximum entropy principle, one can then seek to maximize $H_G$ over the closed and convex set $\Delta(V)$, and the set of its maximizers $\Delta^h(G) := \operatorname{argmax}_{\pi \in \Delta(V)} H_G(\pi)$ is in turn closed and convex.
It is however clear that this set need not be a singleton: for a graph $G$ made of two nodes and a single edge, every distribution $\pi \in \Delta(V)$ has the same entropy $H_G(\pi) = 0.$ 
Importantly,  $\Delta^h(G)$ also contains distributions that violate the symmetry of the underlying metric. However, the set is itself symmetric and verifies locality in the following sense.

\begin{lemma}
    The set $\Delta^h(G) $ is symmetric, i.e., for every graph isometry $\sigma: V\mapsto V$ and distribution $\pi\in \Delta^h(G) $, we also have $\pi\circ\sigma \in \Delta^h(G).$
    Moreover, it is invariant under transformations that preserve class probabilities constant, hence its structure is preserved under the addition of equivalent nodes.
\end{lemma}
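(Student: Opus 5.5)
The plan is to show that the functional $H_G$ itself is invariant under both kinds of transformations. Invariance of the set of maximizers then follows, because a bijection of $\Delta(V)$ onto itself that preserves $H_G$ maps $\operatorname{argmax}$ onto $\operatorname{argmax}$.

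\emph{Symmetry.} Let $\sigma$ be a graph automorphism of $G$. Then $\sigma$ maps cliques to cliques, so it induces a bijection $U=\{U_i\}_{i\in I}\mapsto \sigma^{-1}(U)=\{\sigma^{-1}(U_i)\}_{i\in I}$ on $\mathcal{C}(G)$. For $\pi\in\Delta(V)$ we have $(\pi\circ\sigma)(\sigma^{-1}(U_i))=\pi(U_i)$, so the part masses coincide and
$$H(\sigma^{-1}(U),\pi\circ\sigma)=H(U,\pi).$$
Taking the minimum over $\mathcal{C}(G)$ gives $H_G(\pi\circ\sigma)=H_G(\pi)$. Since $\pi\mapsto\pi\circ\sigma$ is a bijection of $\Delta(V)$ with inverse $\pi\mapsto \pi\circ\sigma^{-1}$, the maximum value is preserved and maximizers go to maximizers.

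\emph{Invariance under class-preserving transformations.} Call $\pi'$ \emph{class-equivalent} to $\pi$ if $\pi'(E)=\pi(E)$ for every $E\in\mathcal{E}(G)$. The key step, and the main obstacle, is to show that
$$H_G(\pi)=\min_{U\in\mathcal{C}(G)\,:\,U\text{ is class-saturated}} H(U,\pi),$$
where \emph{class-saturated} means that each equivalence class lies entirely in a single part.

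The argument uses two facts.
\begin{enumerate}
\item Each class $E$ is a clique, because $N[x]=N[y]$ and $x\in N[x]$ give $x\sim y$.
\item If $K$ is a clique meeting $E$, then $K\cup E$ is a clique: every vertex of $K$ is adjacent to some $x\in E$, hence it lies in $N[x]=N[y]$ for all $y\in E$.
\end{enumerate}

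Given any $U\in\mathcal{C}(G)$ and a class $E$ split across several parts, pick the part $U_j$ (among those meeting $E$) of largest mass. Move all of $E$ into $U_j$. By fact 2, $U_j\cup E$ is still a clique, and the other parts shrink, so they remain cliques. The move transfers mass from smaller parts to the largest part involved. This is a majorization step, so by Schur-concavity of Shannon entropy it does not increase $H(U,\pi)$. More precisely, treat the transfer one donor part at a time. Moving mass $t$ from a part of mass $p_k$ to a part of mass $p_j\ge p_k$ cannot increase $-p\log p$ summed over the two parts, by concavity.

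Iterating over all classes yields a class-saturated partition with no larger entropy, so the minimum is attained on class-saturated partitions. On such partitions the part masses depend only on the class masses $(\pi(E))_E$. Hence $H_G(\pi)=H_G(\pi')$ whenever $\pi,\pi'$ are class-equivalent, and $\Delta^h(G)$ is a union of class-equivalence fibers.

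\emph{Addition of equivalent nodes.} Let $G'=G\cup\{z\}$ with $z\equiv_{G'} x$. Every class of $G$ corresponds to a class of $G'$, with $[x]$ enlarged by $z$, and class-saturated clique partitions correspond bijectively: add $z$ to the part containing $[x]$. By the reduction above, both $H_G$ and $H_{G'}$ are the same function $\Phi$ of the class-mass vector. Therefore
$$\Delta^h(G')=\{\pi' \,:\, (\pi'(E'))_{E'}\in\operatorname{argmax}\Phi\},$$
which is exactly the lift of $\Delta^h(G)$ obtained by redistributing the mass of $[x]$ arbitrarily over $[x]\cup\{z\}$. This is the claimed preservation of structure. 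One detail needs checking here: in $G'$ the equivalence classes other than that of $x$ are unchanged. This holds because $z$ is adjacent to exactly the neighbors of $x$.
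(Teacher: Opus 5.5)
Your proof is correct and follows the same route as the paper's sketch: automorphisms permute clique partitions while preserving part masses, and concavity of Shannon entropy restricts the minimization to partitions that keep each equivalence class in a single part, so $H_G$ depends only on class masses. You also supply details the paper leaves implicit, namely that a class can be absorbed into any clique it meets and that moving mass into the heavier part is a majorization step; together these make the paper's one-line appeal to concavity rigorous.
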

\begin{proof}
    Symmetry follows simply from the permutation invariance of Shannon entropy and the fact that clique partitions in $\mathcal{C}(G)$ are preserved by  graph isometry $\sigma: V\mapsto V.$
    The invariance under cloning follows simply from the fact that partitions that keep equivalent nodes together have lower entropy than those that separate them, since Shannon entropy is strictly concave.
\end{proof}

To define a clone-robust graph weighting function, one must hence select a unique symmetric probability distribution within $\Delta^h(G).$
It may seem that maximizing any strictly concave symmetric function within that set will do the trick, and one may then consider Shannon entropy.
However, this tie-breaking rule must crucially remain independent of the number of occurrences within each equivalence class so as to preserve locality: this requirement is not met by standard Shannon. We hence turn again to the maximization of the class-Shannon entropy $H_G^E$, combined with a standard Shannon entropy tie-breaker (that selects the unique distribution verifying symmetry within the class, i.e., the one that gives uniform distribution over equivalence classes).

We then define the clone-robust entropy-maximizing graph weighting function, i.e.,
$$h: (V,E)\in \mathcal{G} \mapsto \operatorname{argmax}_{\pi \in \Delta^h(G)} \lim_{\delta \to 0} H_G^E(\pi) + \delta H(\pi).$$
This construction naturally admits an information theoretic interpretation. Polsner's epsilon-type graph entropy measures the minimal Shannon information that remains when outcomes are described only up to admissible coarsenings. As such, $H_G(\pi)$ is the number of bits necessary to encode an outcome if one is only allowed to report which clique of a chosen clique-partition the outcome belongs to.
Maximizing $H_G$ (and breaking ties by class and then ordinary Shannon entropy) therefore selects priors that are as uninformative as possible under all neighborhood-preserving coarsenings. 
From a decision-theoretic perspective, this can be understood as a robustness criterion: the selected distribution minimizes the worst-case informational advantage an adversary can gain by observing only clique-level information (hence it is stable under cloning and other locality-preserving transformations).

From a computational standpoint though, evaluating the graph weighting function $h$ is likely intractable. Indeed, computing the graph entropy $H_G(\pi)$  for a given distribution $\pi$ requires already minimizing Shannon entropy over all clique-partitions of $G$; deciding the optimal partition (or even whether a partition with a given entropy bound exists) subsumes classic clique-cover / partition problems and is NP-complete in general~\cite{garey_computers_1979}.

Is this approach amenable to sharing coefficients though? This is not so clear. On one hand, $h$ has a tendency to push weight away from points that are easily covered, i.e., belong to multiple cliques. As such, it solves the problem posed by the graph in Figure~\ref{fig:counter_exple_W^CU} by giving weight vector $(\frac{1}{2}, 0, \frac{1}{4}, \frac{1}{4})$. Note that the zero weight for $b$ directly implies that the sharing coefficient $\chi_{h,G}(a,b)$ is non-negative.
On the other hand, it is not clear why Axiom~\ref{axi:non_neg_chi_vertex} would hold for $h$, since there is no structural reasons why an entropy minimizing cluster should remain unchanged upon removal of a node.

\smallbreak
Does this entropy-based construction satisfy non-negative sharing? 
The answer is not immediate.
On the one hand, the rule $h$ tends to penalize vertices that are easily covered, i.e., those belonging to many maximal cliques. 
In the counterexample of Figure~\ref{fig:counter_exple_W^CU}, this effect is decisive: the resulting weight vector is 
$\big(\frac{1}{2}, 0, \frac{1}{4}, \frac{1}{4}\big)$, 
so that node $b$ receives zero weight. 
In particular, the sharing coefficient $\chi_{h,G}(a,b)$ is then automatically non-negative.

On the other hand, there is no clear structural reason why 
Axiom~\ref{axi:non_neg_chi_vertex} should hold in general. 
The maximizer of $H_G$ depends on a clique-partition that minimizes entropy, and the identity of such a minimizing partition may change when a vertex is removed. 
Nothing guarantees that the minimizing partition of the smaller graph is obtained from that of the larger one by a simple restriction or refinement. 
As a consequence, the induced redistribution of mass can change in an uncontrolled manner, and there is no evident mechanism ensuring that sharing coefficients remain non-negative for graph weighting rule $h$.

\section{Discussion}\label{sec:discuss}

In the setting of online discourse, small but coordinated clusters of near-duplicate messages can distort perceived consensus by saturating the space with slight variations of the same claim. 
Our formal objective was therefore to construct weighting rules that are inherently local and mitigate the influence of dense clusters while preserving symmetry.

As discussed in the introduction of Section~\ref{sec:graph_weighting}, the minimal conditions required for Theorem~\ref{thm:general_weight_function} are straightforward to verify but arguably too permissive.
They ensure the existence of clone-robust weighting rules, but do not by themselves prevent undesirable redistribution effects that make them hard to interpret.
The axioms introduced in Section~\ref{sec:sharing_coeff_graphs}, i.e., Axioms~ \ref{axi:mult_rescale}, \ref{axi:non_neg_chi_vertex}, \ref{axi:vertex_shar_sym} and~\ref{axi:vertex_sharing_domination}, provide a more principled framework for selecting appropriate graph weighting functions. However, this task proves challenging, as many of the clone-robust weighting functions we identified fail to satisfy these additional axioms.

As a specific example, we argue that the non-negative sharing coefficient in Axiom~\ref{axi:non_neg_chi_vertex} is unlikely to hold for any of the graph weighting functions in Lemma~\ref{lem:generalization_w^CU}, exactly because the removal of a node may alter the graph’s equivalence classes in a non-local manner.
More intriguing is whether a refinement of the maximal-clique-based approaches discussed in Section~\ref{sec:MCC_weighting} could satisfy these axioms alongside symmetry and locality.

The main difficulty lies in avoiding a ``spoiler effect'': when a highly connected node $x$ is linked to an isolated node $y$, how can we prevent $y$ from disproportionately increasing $x$’s weight by sharing too much of its own? All our attempts to address this issue were unsuccessful. 
We therefore formulate the following conjecture:

\begin{conjecture}
No clone-robust graph weighting function based on maximal-clique-covers can satisfy 
Axiom~\ref{axi:non_neg_chi_vertex}.
\end{conjecture}

A similar concern arises for the entropy-based construction of Section~\ref{sec:entropy_constr}. 
Although the rule $h$ mitigates some of the pathologies observed earlier, its dependence on entropy-minimizing clique partitions---that are brittle under node removal---suggests that negative sharing may still occur in sufficiently intricate graphs.

\begin{conjecture}
There exists a finite graph $G\in\mathcal{G}$ such that the sharing coefficient $\chi_{h,G}(x,y)$ is negative for two distinct vertices $x,y \in V(G).$
\end{conjecture}

Taken together, these conjectures point to a tension at the heart of the problem:  minimizing the effect of coordinated replication while preserving symmetry, locality, and meaningful sharing constraints may require fundamentally different constructions than those currently identified.

Proving (or refuting) these conjectures---and more broadly, identifying clone-robust weighting functions that satisfy the additional axioms of Section~\ref{sec:sharing_coeff_graphs} and admit interpretable sharing coefficients---is left for future work.
Another promising direction for future research is a systematic study of clone-resistant graph-entropy notions, beyond clique-based formulations

\newpage

\bibliographystyle{ACM-Reference-Format}
\bibliography{clone_general}

\newpage

\appendix

\section{Proof of the Main Theorems}\label{sec:proof_main}

Before delving into the proof of Theorem~\ref{thm:general_weight_function}, let us first recall its formulation.

\begin{reptheorem}{thm:general_weight_function}
    Let $w$ be a clone-robust graph weighting function, $\alpha$ be a positive radius and $\nu$ be a probability density function with support in $[0,\alpha]$ and value bounded by $\overline{\nu}.$

    \noindent For every finite subset $S\subseteq M$ and every $x\in S$, define
    $$       f_{\nu,w}(S)(x) \;=\; \int_0^\alpha \nu(r)\, w(G_r(S))(x)\,dr. $$
    Then $f_{\nu,w}$ is a weighting function of $(M,d)$ and satisfies the following properties for all $S \in\mathcal{P}(M)$ and all $x\in S$:

    \begin{itemize}
        \item \textbf{Positivity}, i.e., $f_{\nu,w}(S)(x) >0$;
        
        \item \textbf{Symmetry}, i.e.,  $f_{\nu,w} (S)(x) = f_{\nu,w} (S)(\sigma(x))$ for every self-isometry $\sigma:S\mapsto S.$
        
        \item \textbf{Lipschitz Clone Fairness}, i.e., $ \vert f_{\nu,w} (S)(x) - f_{\nu,w}(S)(y) \vert \leq 2 ~\overline{\nu} ~\vert S \vert ~d(x,y)$ for each $y \in S.$
        
        \item \textbf{Lipschitz $\alpha$-Locality}, i.e.,
        $ \vert f_{\nu,w} (S \cup \{y\})(z) - f_{\nu,w}(S)(z) \vert \leq 2 ~\overline{\nu} ~\vert S \vert ~d(x,y)$ for each $z \in S$ with $d(x,z)\geq \alpha $ and $y \in M\setminus \! S.$

        \item \textbf{Lipschitz Continuity}, i.e., $\vert f_{\nu,w} (S)(x) - f_{\nu,w} (\pi(S))(\pi(x)) \vert \leq 2 ~\overline{\nu} ~\vert S \vert^2 ~\max_{x\in S} d(x,\pi(x))$,
        for each bijection $\pi:S\mapsto \pi(S) \subseteq M.$

    \end{itemize}

    \noindent Moreover, $f_{\nu,w} (S)(x)$ can be computed in time $O\big( \vert S\vert^2 (W(\vert S\vert) + C_\nu + C_d + \log \vert S\vert )\big)$, where $W(i)$ is the worst-case evaluation complexity of $w$ over all graphs with $i \in \mathbb{N}$ vertices, $C_\nu$ is the worst-case time complexity to evaluate the cumulative distribution of $\nu$ at a single real argument, and $C_d$ is the worst-case complexity to evaluate the metric $d.$
\end{reptheorem}

\begin{proof}
Let $S$ be a finite subset of $M$ and $x$ be an element in $S.$ We denote by $\mathcal{D}^x_S = \{d(x,z) \mid z \in S, z \neq x \}$ the set of distances between $x$ and $S$, and let $\mathcal{D}_S = \bigcup_{x\in S } \mathcal{D}_S^x = \{d(x,y) \mid x,y \in S\} \}$ be the set of pairwise distances between distinct elements of $S.$ Note that $\vert \mathcal{D}_S\vert \leq \vert S \vert (\vert S \vert -1) /2$ by \emph{separability} and \emph{symmetry} of the metric $d.$

\begin{itemize}
    \item \textbf{Well-definedness and Positivity:} 
    Let $ 0 < r_1 < \dots < r_k < \infty$ be the increasing list of the distinct values in $\mathcal{D}_S. $ 
    In particular, note that $k \leq \vert \mathcal{D}_S\vert \leq \vert S \vert (\vert S \vert -1) /2.$ Define the partition of $\mathbb{R}_{\geq 0}$ into semi-open intervals: $I_0 = [0, r_1),$ $I_1 = [r_1, r_2), \dots, I_k = [r_k, \infty).$ 
     Note that the edge set $E_r(S)$ remains unchanged on each interval $I_j$, hence the map $r \mapsto w(G_r(S))(x)$ is piecewise-constant with at most $k+1$ pieces. 
    
    Since the density $\nu: [0,\alpha] \mapsto [0 ,\overline{\nu}] $ is a bounded measurable function, the map $r \mapsto \nu(r)  w(G_r(S))(x)$ is integrable and $f_{\nu,w} (S)(x)$ is well-defined.
    Moreover, the integrand is positive on the support of $\nu$ by positivity of $w$ and non-negativity of $\nu$, and we directly get that $f_{\nu,w}(S)(x) >0.$
    
    \item \textbf{Evaluation Complexity:}
    To evaluate $f_{\nu,w}(S)(x)$ we proceed in three steps.

    \begin{enumerate}
        \item \emph{Distances.}  
    Compute all pairwise distances in $\mathcal{D}_S$, which takes $O(|S|^2 C_d)$ time.  
    Discard values greater than $\alpha$ and sort the remaining ones, yielding the ordered list of threshold radii $ 0 = r_0 < r_1 < \dots < r_\ell < \alpha$ at an additional cost $O(|S|^2 \log |S|)$.

    \item  \emph{Graphs at threshold radii.}  
    For each $i\in\{0,\dots,\ell\}$, construct the graph $G_{r_i}(S)$ incrementally by adding the edges corresponding to $r_i$.  
    Evaluate $w(G_{r_i}(S))(x)$ and compute the value of the cumulative distribution function $\Gamma(r_{i+1})$.  
    This costs $O(W(\vert S\vert)+ C_\nu)$ per index $i$.  
    \item \emph{Integration.}  
    Iteratively accumulate the sum $f_{\nu,w}(S)(x) = \sum_{j=0}^\ell \big(\Gamma(r_{j+1})-\Gamma(r_j)\big) \cdot w(G_{r_j}(S))(x).$
    
    \end{enumerate}
      
    Overall, the above procedure results in worst-case time complexity $O\big( \vert S\vert^2 (W(\vert S\vert) + C_\nu + C_d + \log \vert S\vert )\big).$

    \item \textbf{Symmetry:} Let $\sigma:S \mapsto S$ be a self-isometry and $r>0$ be a positive radius.
    Note that $\sigma$ is a graph automorphism of $G_r(S)$: indeed, $u,v \in S$ are adjacent in $G_r(S)$ if and only if $ d(\sigma(u),\sigma(v)) = d(u,v) \leq r$, that is if and only if $\sigma(u)$ and $ \sigma(v) $ are adjacent in $G_r(S).$ 
    By symmetry of $w$, we then have, i.e.,
    \begin{equation*}
        \begin{aligned}
            f_{\nu,w}(S)(\sigma(x)) 
            &= \int_0^\alpha \nu(r)~ w(G_r(S))(\sigma(x)) dr, \\
            &= \int_0^\alpha \nu(r)~ w(G_r(S))(x) dr =  f_{\nu,w}(S)(x),
        \end{aligned}
    \end{equation*}
  and $f_{\nu,w}(S)$ is symmetric.

  \item \textbf{Lipschitz Clone-Fairness:} 
  Let $y$ be an element of $S$, and let $\overline{B}_{d(x,y)}(\mathcal{D}^x_S) = \bigcup_{z\in S} [ d(x,z) - d(x,y), d(x,z) + d(x,y)] $ denote the union of closed balls of $\mathbb{R}$ of radius $d(x,y)$ centered around the elements of $\mathcal{D}^x_S$.
  
  For a positive radius $r$ in $\mathbb{R}_{>0} \setminus \! \overline{B}_{d(x,y)}(\mathcal{D}^x_S)$, note that $x$ and $y$ are equivalent in $G_r(S).$ Indeed, for $u \in S$ with $r\geq d(x,u)$, we also have $r\geq d(x,u) + d(x,y) \geq d(y,u)$ by definition of $r$ and the triangle inequality, hence $u$ is connected to both $x$ and $y$ in $G_r(S).$ For $u \in S$ such that $r< d(x,u)$, a similar argument gives $r< d(x,u) - d(x,y) \leq d(y,u)$ and $u$ is a neighbor of neither $x$ nor $y.$ 
  
  Let $\sigma:S\mapsto S$ be the transposition that exchanges $x$ and $y$, i.e., $\sigma(x) = y$, $\sigma(y)= x$ and $\sigma(z) = z$ for all $z\in S \setminus \!\{x,y\}.$ 
  Then $\sigma$ is a graph isomorphism for all $r\in\mathbb{R}_{>0} \setminus \! \overline{B}_{d(x,y)}(\mathcal{D}^x_S)$, and the \emph{symmetry} of $w$ implies, i.e.,
 \begin{equation}
    \label{equ:general_clone_fair}
     \begin{aligned}
          &\vert f_{\nu,w} (S)(x) - f_{\nu,w}(S)(y) \vert \\
          &\leq \int_0^\alpha \nu(r)~ \big \vert w(G_r(S))(x) -  w(G_r(\sigma(S))) (\sigma(x)) \big \vert dr ,\\
          &\stackrel{(a)}{\leq}  \int_{(0,\alpha] \cap \overline{B}_{d(x,y)}(\mathcal{D}^x_S)} ~ \nu(r) ~dr ,\\
          &\stackrel{(b)}{\leq} 2 ~\overline{\nu} ~\vert S \vert ~d(x,y).
     \end{aligned}
 \end{equation}
 Inequality $(a)$ uses that the difference of graph weighting function is null on $(0,\alpha] \setminus \! \overline{B}_{d(x,y)}(\mathcal{D}^x_S)$ by \emph{symmetry} of $w$, and bounded by one on $(0,\alpha] \cap \overline{B}_{d(x,y)}(\mathcal{D}^x_S).$ Inequality $(b)$ uses the bound on the probability density function $\overline{\nu}$, as well as the fact that that the Lebesgue measure of $(0,\alpha] \cap \overline{B}_{d(x,y)}(\mathcal{D}^x_S)$ is bounded by that of $ \overline{B}_{d(x,y)}(\mathcal{D}^x_S)$, which is in turn bounded by $\vert \mathcal{D}^x_S \vert \leq \vert S \vert -1$ times the measure of each interval $2 d(x,y) .$

 \item \textbf{Lipschitz $\alpha$-Locality:}
 Let $z$ be an element of $S$ such that $d(x,z) \geq \alpha$, and let $y$ be in $M\setminus \! S.$ 
 For a positive radius $r$ in $(0,\alpha) \setminus \! \overline{B}_{d(x,y)}(\mathcal{D}^x_S)$, note first that $z$ is not connected to $x$ in $G_r(S)$ since $r<\alpha \leq d(x,z).$ Furthermore, the same proof as for \emph{Lipschitz Clone-Fairness} shows that $x$ and $y$ are equivalent in $G_r(S \cup \{y\}).$

 Similarly to Equation~\ref{equ:general_clone_fair}, we conclude using the \emph{locality} of $w$, i.e.,
 \begin{equation*}
     \begin{aligned}
          \big\vert f_{\nu,w} (S\cup\{y\})(z) - f_{\nu,w}(S)(z) \big\vert 
          &\leq  \int_{(0,\alpha) \cap \overline{B}_{d(x,y)}(\mathcal{D}^x_S)} \nu(r)dr ,\\
          &\leq 2 ~\overline{\nu} ~\vert S \vert ~d(x,y).
     \end{aligned}
 \end{equation*}

 \item \textbf{Lipschitz Continuity:} 
 Let $ T$ be a finite subset of $M$ of cardinality $\vert T \vert = \vert S \vert$, and let $ \pi : S \mapsto T$ be a bijective map. We denote the maximal distance between elements of $S$ and their image in $T$ as  $\delta := \max_{x\in S} d(x,\pi(x)) $ and consider  $\overline{B}_{2\delta}(\mathcal{D}_S) = \bigcup_{x,y\in S} [ d(x,y) -  2\delta, d(x,y) + 2\delta] $ the union of closed balls of $\mathbb{R}$ of radius $2\delta$ centered around the elements of $\mathcal{D}_S.$

 For a positive radius $r\in (0,\alpha]\setminus \! \overline{B}_{\delta}(\mathcal{D}_S)$, note that $G_r(S)$ and $G_r(T)$ are isomorphic, i.e., $x,y \in S$ are connected in $G_r(S)$ if and only if $\pi(x)$ and $\pi(y)$ are connected in $G_r(T).$ Indeed, if $r\geq d(x,y)$, then we also have $r\geq d(x,y) + 2 \delta \geq d(x,y) + d(x,\pi(x)) + d(y,\pi(y)) \geq d(\pi(x),\pi(y))$ by definition of $r$ and $\delta$, as well by the triangle inequality. Hence $\pi(x)$ is connected to $\pi(y)$ in $G_r(T).$ Similarly if $r< d(x,y)$, we also have $r< d(x,y) -2 \delta \leq d(x,y) - d(x,\pi(x) - d(y,\pi(y)) \leq d(\pi(x),\pi(y))$ and $\pi(x)$ is not a neighbor of $\pi(y)$ in $G_r(T).$

  Similarly to Equation~\ref{equ:general_clone_fair}, we conclude using the \emph{symmetry} of $w$, i.e.,
   \begin{equation*}
     \begin{aligned}
          \big\vert f_{\nu,w} (S)(x) - f_{\nu,w}(\pi(S))(\pi(x)) \big\vert 
          &\leq  \int_{(0,\alpha] \cap \overline{B}_{2\delta}(\mathcal{D}_S)} \nu(r) dr , \\
          &\leq 2 ~\overline{\nu} ~\vert S \vert^2 ~ \max_{x\in S} d(x,\pi(x)).
     \end{aligned}
 \end{equation*}
 The last inequality uses that the Lebesgue measure of $\overline{B}_{2\delta}(\mathcal{D}_S)$ is bounded by the cardinality  $ \vert \mathcal{D}_S \vert \leq \vert S \vert (\vert S \vert -1) /2$ times the diameter of each interval $4 \max_{x\in S} d(x,\pi(x)).$

\end{itemize}
\end{proof}

\section{Proofs of Secondary Lemma}\label{sec:proof_lem}

This section contains the proofs of the secondary Lemma omitted from the main body.
We first recall Lemma~\ref{lem:generalization_w^CU} before turning to its proof.

\begin{replemma}{lem:generalization_w^CU}
Let $w$ be a graph weighting function that satisfies positivity and symmetry. For a graph $G \in \mathcal{G},$ let $G/{\equiv}$ denote the quotient graph obtained by collapsing equivalent vertices $x\sim_G y$ as a unique node $[x]_G.$ Moreover, let $P_G$ denote the lazy random walk kernel defined by $P_G(x\to y) = 1/(1+\deg_G(x))$ for $y\in N_G[x]$ and null otherwise.
We define
$$  \tilde w(G)(x) \;:=\; \frac{w\big(G/{\equiv}\big)\big([x]_{G}\big)}{\lvert [x]_{G}\rvert}, \qquad x\in V(G),$$
as well as its smoothed counterpart
$$\hat w(G)(x) \;:=\; \sum_{y\in V(G)} \tilde w(G)(y)\, P_G(y\to x).$$

\noindent Then both $\tilde{w}$ and $\hat{w}$ are clone-robust graph weighting functions.
\end{replemma}

\begin{proof}
We prove the claimed properties in three short steps.

\paragraph{Well-definedness.}
By assumption, $w\big(G/{\equiv}\big)$ is a probability distribution on the
vertices of the quotient graph $G/{\equiv}.$ Summing $\tilde{w}(G)$ over the
original vertices gives
\begin{equation*}
    \begin{aligned}
        \sum_{x\in V(G)} \tilde{w} (G)(x) 
        &= \sum_{[x]_G\in V(G)/{\equiv_G}} \sum_{x\in [x]_G} \frac{w(G/{\equiv})([x]_G)}{\vert [x]_G\vert },\\
    &= \sum_{[x]_G\in V(G)/{\equiv_G}} w(G/{\equiv})([x]_G) = 1.
    \end{aligned}
\end{equation*}
Since $P_G$ is a Markov transition kernel, $\hat{w}(G)$ is also a probability distribution.

\paragraph{Positivity and symmetry.}
Positivity is transmitted from $w$ to $\tilde{w}$ because each class has a finite size since $G$ is finite, and then from $\tilde{w}$ to $\hat{w}$ since each vertex $x$ receives at least a fraction $P_G(x\to x) = 1/(\deg_G(x)+1) >0$ of its previous mass from itself.

$\tilde{w}$ inherits symmetry from $w$ because any graph isomorphism $\sigma$ of $G$ induces an isomorphism $\sigma/{\equiv}$ of the quotient graph $G/{\sim_G}$ and carries classes to classes. In turn, $\hat{w}$ inherits symmetry from $\tilde{w}$ because the kernel $P_G$ is also isometric-invariant.

\paragraph{Locality.}
Let $x,y,z$ be distinct vertices in $V(G)$ such that $ y \equiv_G x$ and  $z\notin[x]_G$: this is a weaker condition than $z\in V(G)\setminus \! N_G[x].$
Let $G'$ denotes the graph $ G\setminus \! \{y\}$; note that $G/ {\equiv} = G'/ {\equiv}$, as well as $\vert [z]_G \vert = \vert [z]_{G'} \vert.$
We then directly get
\begin{equation*}
    \tilde{w}(G')(z) =  \frac{w\big(G'/{\equiv}\big)\big([z]_{G'}\big)}{\lvert [z]_{G'}\rvert} = \tilde{w}(G)(z) ,
\end{equation*}
and $\tilde{w}$ verifies in particular locality.

Moreover, consider now $z' \in V(G)\setminus \! N_G[x].$ Using that $P_G(u \to z') = 0$ for all $u \notin N_G[z']$, we get
\begin{equation*}
\begin{aligned}
    \hat{w}(G')(z') 
    &= \sum_{u\in N_G[z']} \tilde{w}(G')(u) P_G(u\to z'), \\
    &= \sum_{u\in N_G[z']} \tilde{w}(G)(u) P_G(u\to z')
    = \hat{w}(G)(z) .
\end{aligned}
\end{equation*}
The second equality holds since $u \in N_G[z']$ cannot belong to $[x]_G$, and using the previous result for $\tilde{w}.$ This shows locality for $\hat{w}.$

\end{proof}

Next, we consider the more involved proof of Lemma~\ref{lem:strict_loc} for the incompatibility of \emph{symmetry} and \emph{strong locality.}

\begin{replemma}{lem:strict_loc}
No graph weighting function $w$ can simultaneously satisfy \textbf{symmetry}, and the following 
\textbf{strong locality} property: for every graph 
$G \in \mathcal{G}$ and vertices $x,y,z \in V(G)$ with $y \in N_G[x]$ and $z \notin N_G[x]$, we have
$w(G)(z) \;=\; w(G \setminus \! \{y\})(z).$
\end{replemma}

\begin{proof}
Consider the graph $G$ consisting of a central vertex $d$ connected to three disjoint paths of length three. 
Explicitly, for each $i=1,2,3$, consider the path $P_i:
a_i - b_i - c_i$ and connect its endpoint $c_i$ to the central vertex $d.$

Now consider the subgraph $G\setminus \!( P_2 \cup P3).$ By symmetry, vertices $a_1$ and $d$ must receive the same non-negative weight, say $w_1$,  
and vertices $b_1$ and $c_1$ must receive the same weight, say $w_2$. We moreover have $2(w_1 + w_2) =1$ since $w(G\setminus \!( P_2 \cup P3))$ is a probability distribution.

We now add back vertex $c_2$: by strong locality, the weight of vertices $a_1$ and $b_1$ must remain unchanged, i.e., $w(G\setminus \! (P_3 \cup \{a_2,b_2\}))(a_1) = w_1$ and $w(G\setminus \! (P_3 \cup \{a_2,b_2\}))(b_1) = w_2.$ By symmetry, we moreover know that $w(G\setminus \! (P_3 \cup \{a_2,b_2\}))(c_2) = w(G\setminus \! (P_3 \cup \{a_2,b_2\}))(a_1)$ and $w(G\setminus \! (P_3 \cup \{a_2,b_2\}))(d) = w(G\setminus \! (P_3 \cup \{a_2,b_2\}))(b_1)$, an we conclude that $w(G\setminus \! (P_3 \cup \{a_2,b_2\}))(c_1) = 1 - 2(w_1+w_2) = 0$ since $w(G\setminus \! (P_3 \cup \{a_2,b_2\}))$ is a probability distribution.

We now iteratively add back $b_2$ and  $a_2$; the same reasoning holds and we get $w(G\setminus \! P_3) (c_1) = w(G\setminus \! P_3) (d)= w(G\setminus \! P_3) (c_2) = 0$, $w(G\setminus \! P_3) (a_1) = w(G\setminus \! P_3) (a_2)= w_1$ and $w(G\setminus \! P_3) (b_1) = w(G\setminus \! P_3) (b_2)= w_2.$ 

Finally, we iteratively add back vertices $c_3$, $b_3$ and $a_3$.
Importantly, the weights of $a_1, a_2,b_1,b_2$ cannot change under strong locality, and we conclude that all the newly introduced vertices must receive weight zero. We finally conclude using symmetry that $w_1 = w(G)(a_1) = w(G)(a_3) = 0$ and $w_2 = w(G)(b_1) = w(G)(b_3) = 0$: this is a contradiction.

\end{proof}

\end{document}